\documentclass[11pt]{article}
\usepackage[margin=1.0in]{geometry}

\usepackage{color}
\usepackage{url}
\usepackage{times}
\usepackage{xspace}
\usepackage{listings}
\usepackage{enumitem}
\usepackage{amssymb}
\usepackage{amsmath}
\usepackage[normalem]{ulem}

\usepackage{caption}

\usepackage{authblk}

\newcommand{\E}{{\mathbb E}}

\newcommand{\SL}{\mbox{SL}\xspace}
\newcommand{\LL}{\mbox{LL}\xspace}

\newcommand{\Alg}{\mbox{Alg}\xspace}
\newcommand{\OPT}{\mbox{OPT}\xspace}
\newcommand{\OFF}{\mbox{OFF}\xspace}
\newcommand{\ep}{\varepsilon}

\newcommand{\rhoflr}{\overline{\gamma}}
\newcommand{\rhoover}{\hat{\gamma}}

\renewcommand{\rho}{\gamma}

\newcommand{\lmin}{{\ell_{min}}\xspace}
\newcommand{\lmax}{{\ell_{max}}\xspace}
\newcommand{\pmin}{{p}\xspace}
\newcommand{\pmax}{{q}\xspace}

\newcommand{\Ralg}{\mbox{SL-Preamble}\xspace}
\newcommand{\CRalg}{\mbox{CSL-Preamble}\xspace}

\newtheorem{proposition}{Proposition}
\newtheorem{theorem}{Theorem}
\newtheorem{lemma}{Lemma}

\newenvironment{proof}{\noindent{\bf Proof:}}{\hfill\rule{2mm}{2mm}\\}    

\newcommand{\sq}{\hbox{\rlap{$\sqcap$}$\sqcup$}}
\newcommand{\qed}{\hspace*{\fill}\sq}

\newenvironment{packed_enum}{
\begin{enumerate}
  \setlength{\itemsep}{1pt}
  \setlength{\parskip}{0pt}
  \setlength{\parsep}{0pt}
}{\end{enumerate}
}

\newcommand{\remove}[1]{}


\usepackage[final]{changes}
\normalem

\definechangesauthor{af}{red}
\definechangesauthor{cg}{blue}
\definechangesauthor{dk}{brown}
\definechangesauthor{ez}{magenta}

\begin{document}

\title{Measuring the Impact of Adversarial Errors\\ on Packet Scheduling Strategies\thanks{This research was supported in part by the Comunidad de Madrid grant S2009TIC-1692, Spanish MICINN/MINECO grant TEC2011-29688-C02-01, and NSF of China grant 61020106002.}
}

\author[1]{Antonio Fern\'andez Anta}
\author[2]{Chryssis Georgiou}
\author[3]{Dariusz R. Kowalski}
\author[1]{Joerg Widmer}
\author[1,4]{Elli Zavou\thanks{Partially supported by FPU Grant from MECD}}

\affil[1]{Institute IMDEA Networks}
\affil[2]{University of Cyprus}
\affil[3]{University of Liverpool}
\affil[4]{Universidad Carlos III de Madrid}

\date{}

\maketitle
\begin{abstract}
In this paper we explore the problem of achieving efficient packet 
transmission over unreliable links with worst case occurrence of 
errors. In such a setup, even an omniscient offline scheduling strategy cannot achieve stability of the packet queue, nor is it able to use up all the available bandwidth. Hence, an important 
first step is to identify an appropriate metric for measuring the 
efficiency of scheduling strategies in such a setting. To this end, we propose a \emph{relative throughput} metric which corresponds to the \emph{long term competitive ratio} of the algorithm 
with respect to the optimal. We then explore the impact of the error detection mechanism and feedback delay 
on our measure. We compare instantaneous  
error feedback with deferred error feedback, that requires a 
faulty packet to be fully received in order to detect the 
error. We propose algorithms for worst-case adversarial and stochastic packet arrival models, and formally analyze their performance. The relative throughput achieved by these algorithms is shown to be close to optimal by deriving lower bounds on the relative throughput of the algorithms and almost matching upper bounds for any algorithm in the considered settings.
Our collection of results demonstrate the potential of using instantaneous feedback to improve the performance of communication systems in adverse environments.
\end{abstract}



\section{Introduction}
\label{s:intro}

\paragraph{\bf Motivation.}

Packet scheduling~\cite{packet_scheduling} is one of the most fundamental problems in computer networks. 
As packets arrive, the sender (or scheduler) needs to continuously make scheduling decisions. 
Typically,
the objective is 
to maximize the {\em throughput} of the link or to achieve stability. 
Furthermore, the sender needs to take decisions without knowledge of future packet
arrivals. Therefore, many times this problem is treated as an {\em online} scheduling
problem~\cite{Peleg1992,online_scheduling} and {\em competitive analysis}~\cite{aspnes,tarjan} is used to evaluate the performance of
proposed solutions: the worst-case performance of an online algorithm is compared with the
performance of an offline optimal algorithm that has a priori knowledge of the problem's
input. \vspace{-.1em}

In this work we focus on online packet scheduling over {\em unreliable} links,
where packets transmitted over the link might be corrupted by bit errors. 
Such errors may, for example, be caused by an increased noise level or transient interference on the link, that in the worst case could be caused by a malicious entity or an attacker. 
In the case of an error the affected packets must be retransmitted.
To investigate 
the impact of such errors on the scheduling problem under study and provide
{\em provable guarantees}, we consider the worst case occurrence of errors, that is, we consider errors 
caused by an omniscient and adaptive {\em adversary}~\cite{jamming}. The adversary has full knowledge of the 
protocol and its history, and it uses this knowledge to decide whether it will cause errors on 
the packets transmitted in the link at a certain time or not. 
Within this general framework, the packet arrival is continuous and can either be controlled by the adversary or be
stochastic.

\begin{table}
\begin{footnotesize}
\begin{center}
	\begin{tabular}{| c | c | c | c |}
	\hline 
	 Arrivals & Feedback & Upper Bound & Lower Bound \\ \hline \hline
	 & Deferred  & $0$ & $0$ \\ \cline{2-4} 
	 & & & \\	 
	Adversarial &  Instantaneous & $T_{\Alg} \leq \rhoflr/(\rho + \rhoflr)$ & $T_{SL-Pr} \geq \rhoflr/(\rho + \rhoflr)$ \\
	  &  &  & \\
	 &  &  $T_{LL} = 0$, $T_{SL} \leq 1/(\rho + 1)$ & \\ \hline \hline
	 
	 & Deferred & $0$ & $0$ \\ \cline{2-4}
	 & & & \\	 
	Stochastic & Instantaneous& $T_{\Alg} \leq \rhoflr/\rho$ & 
	$T_{CSL-Pr} \geq \rhoflr/(\rho + \rhoflr)$, if $\lambda p \lmin \le \rhoflr/ (2\rho)$ \\
	 & & $T_{\Alg} \leq \max\left\{ \lambda p\lmin, \rhoflr/(\rho + \rhoflr)\right\}$, if $p<q$ & $T_{CSL-Pr} \geq \min\left\{ \lambda p\lmin, \rhoflr/\rho \right\}$, otherwise \\
	 & & $T_{LL} = 0$, $T_{SL} \leq 1 / (\rho + 1) $  & \\ \hline 	 	
	\end{tabular}
	\caption{Summary of results presented. The results for deferred feedback are for one packet length, while the results for instantaneous feedback are for 2 packet lengths $\lmin$ and $\lmax$. Note that $\rho = \lmax / \lmin$, $\rhoflr = \lfloor\rho\rfloor$, $\lambda p$ is the arrival rate of $\lmin$ packets, and $p$ and $q=1-p$ are the proportions of $\lmin$ and $\lmax$ packets, respectively.
	}
	\label{t:1}
\end{center}
\end{footnotesize}
\end{table}

\paragraph{\bf Contributions.}
Packet scheduling performance is often evaluated using throughput, measured in absolute terms (e.g., in bits per second) or normalized with respect to the bandwidth (maximum transmission capacity) of the link. This throughput metric makes sense for a link without errors or with random errors, where the full capacity of the link can be achieved under certain conditions. However, if adversarial bit errors
can occur during the transmission of packets, the full capacity is usually not achievable by any protocol, unless restrictions are imposed on the adversary~\cite{HSWD,jamming}. Moreover, since a bit error renders a whole packet unusable (unless costly techniques like PPR~\cite{Jamieson:2007:PPP:1282380.1282426} are used), a throughput equal to the capacity minus the bits with errors is not achievable either. 
As a consequence, in a link with adversarial bit errors, a fair comparison should compare the throughput of a specific algorithm to the maximum achievable amount of traffic that \emph{any} protocol could send across the link. This introduces the challenge of identifying an appropriate metric to measure the throughput of a protocol over a link with adversarial errors.\vspace{.3em}

\noindent\emph{Relative throughput:}
Our first contribution is the proposal of a {\em relative throughput} metric
for packet scheduling algorithms under unreliable links (Section~\ref{sec:model}). 
This metric is a variation of the competitive ratio typically considered in online scheduling. 
Instead of considering the ratio of the performance of a given algorithm  over that of the optimal offline algorithm, we consider the limit of this ratio as time goes to infinity. This
corresponds to the {\em long term competitive ratio} of the algorithm with respect to the optimal.\vspace{.3em}

\noindent\emph{Problem outline:}
We consider a sender that transmits packets to a receiver over an unreliable link, where the errors are controlled by an adversary. 
Regarding packet arrivals (at the sender), we consider two models: (a) the arrival times and their sizes follow a stochastic distribution, 
and (b) the arrival times and their sizes are also controlled by an adversary.
The general offline version of our scheduling problem, in which the scheduling algorithm knows a priori when errors 
will occur, is NP-hard \footnote{Some of the results are omitted due to space limitation and can be found in the Appendix.}.
This further motivates the need for devising simple and efficient online algorithms for the problem we consider.\vspace{.3em}

\noindent\emph{Feedback mechanisms:}
Then, moving to the online problem requires detecting 
the packets received with errors, in order to retransmit them. 
The usual mechanism~\cite{ARQ_CRC}, which we call {\em deferred feedback}, detects and notifies the sender that a packet has suffered an error after the whole packet has been received by the receiver. 
It can be shown that, 
even when the packet arrivals are stochastic and packets have the same length, no online scheduling algorithm with deferred feedback can 
be competitive with respect to the offline one.
Hence, we center our study a second mechanism, which we call {\em instantaneous feedback}. It detects and notifies the sender of an error the moment this error occurs. 
This mechanism can be thought of as an abstraction of the emerging
Continuous Error Detection (CED) framework~\cite{CED} that uses arithmetic coding to provide continuous error detection.
The difference between deferred and instantaneous feedback is drastic, since for the instantaneous feedback mechanism, and for packets of the same length, it is easy to obtain optimal relative throughput of 1, even in the case of adversarial arrivals.
However, the problem becomes substantially more challenging in the case of non-uniform packet lengths. Hence, we analyze the problem for the case of packets with two different lengths, $\lmin$ and $\lmax$, where $\lmin<\lmax$. \vspace{.3em} 

\noindent\emph{Bounds for adversarial arrivals:}
We show (Section~\ref{sec:adversarial}), that an online algorithm with instantaneous feedback can achieve at most almost half the relative throughput with respect to the offline one.
It can also be shown
that two basic scheduling policies, 
giving 
priority either to short {\em(SL -- Shortest Length)} or long
{\em(LL -- Longest Length)} packets, are not efficient under adversarial errors.
Therefore, 
we devise a new algorithm, called \Ralg, and show that it achieves the optimal online relative throughput. 
Our algorithm, transmits a ``sufficiently'' large number of short packets while making sure that long packets are transmitted from time~to~time. \vspace{.3em}

\noindent\emph{Bounds for stochastic arrivals:}
In the case of stochastic packet arrivals (Section~\ref{sec:stochastic}), as one might expect, we obtain better relative throughput in some cases. The results are summarized in Table~\ref{t:1}.
We propose and analyze an algorithm, called \CRalg, 
that achieves relative throughput that is optimal.
This algorithm schedules packets according to $\Ralg$,
giving preference to short packets depending on the parameters of the stochastic distribution of packet arrivals\setcounter{footnote}{0}\footnote{If the distribution is not known, then obviously one needs to use the algorithm developed for the case of adversarial arrivals that needs no knowledge a priori.}.
We show that the performance of algorithm \CRalg\ is 
optimal for a wide range of parameters of stochastic distributions of packets arrivals, by proving the matching upper bound\footnote{Analyzing algorithms yields lower bounds on the relative throughput, while analyzing adversarial strategies yields upper bounds on the relative throughput.}
for the relative throughput of any algorithm in this setting.\vspace{.3em} 

\noindent\emph{A note on randomization:}
All the proposed algorithms are deterministic. 
Interestingly, it can be shown
that using randomization 
does not improve the results;
the upper bounds already discussed hold also for the randomized case. \added[ez]{For more details see Appendix~\ref{sec:randomization}.}
\vspace{.3em}

\noindent To the best of our knowledge, this is the first work that investigates in depth the impact of adversarial worst-case link errors on the throughput 
of the packet scheduling problem.
Collectively, our results (see Table~\ref{t:1}) show that instantaneous feedback can achieve a significant relative throughput under worst-case adversarial errors (almost half the relative throughput that the offline optimal algorithm can achieve).
Furthermore, we observe that in some cases, stochastic arrivals allow for better performance.

\paragraph{\bf Related work.}
A vast amount of work exists for online (packet) scheduling.
Here we focus only on the work that is most related to 
ours.
For more information the reader can consult~\cite{scheduling} and~\cite{online_scheduling}.
The work in~\cite{kesselheim} considers the packet scheduling problem in
wireless networks. Like our work, it looks at both stochastic and
adversarial arrivals. Unlike our work though, it considers only {\em reliable} links. 
Its main objective is to achieve maximal throughput guaranteeing {\em stabiliy}, meaning bounded time from injection to delivery.
The work in~\cite{HSWD} considers online packet scheduling over
a wireless channel, where both the channel conditions and the data
arrivals are governed by an adversary.
Its main objective is to
design scheduling algorithms for the base-station to achieve stability
in terms of the size of queues of each mobile user.
Our work does not focus on stability, as we assume errors controlled by an unbounded adversary that
can always prevent it. 
The work in~\cite{jamming} 
considers the problem of devising local access control protocols for wireless
networks with a single channel, that are provably robust against {\em adaptive
adversarial jamming}. At certain time steps, the adversary can jam the communication in
the channel in such a way that the wireless nodes do not receive messages
(unlike our work, where the receiver might receive a message, but it
might contain bit errors). Although the model and the objectives of this 
line of work
is different from ours, it shares the same concept of studying the
impact of adversarial behavior on network communication.

\section{Model}
\label{sec:model}

\paragraph{\bf Network setting.} 
We consider a sending station transmitting packets over a link. Packets arrive at the sending station continuously
and may have different lengths.
Each packet that arrives is associated with a length and its arrival time (based on the station's local clock). We denote by $\lmin$ and $\lmax$ the smallest and largest lengths, respectively, that a packet may have.
We use the notation $\rho = \lmax/\lmin$, $\rhoflr = \lfloor\rho\rfloor$ and $\rhoover = \lceil\rho\rceil - 1$.
The link is unreliable, that is, transmitted packets might be corrupted by bit errors. We assume that all packets are transmitted at the same bit rate, hence the transmission time is proportional to the packet's length.

\vspace{-.5em}
\paragraph{\bf Arrival models.}
We consider two models for packet arrivals. 
\begin{itemize}[leftmargin=5mm]
\item {\em Adversarial:}
The packets' arrival time and length are governed by an adversary. We define an adversarial arrival pattern as a collection of packet arrivals caused by the adversary. 
\vspace{.2em}
\item {\em Stochastic:}
We consider a probabilistic distribution $D_a$, under which packets arrive at the sending station and a probabilistic distribution $D_s$, for the length of the packets.
In particular, we assume 
packets arriving according to a Poisson process with parameter $\lambda>0$. When considering two packet lengths, $\lmin$ and $\lmax$, each packet that arrives is assigned one of the two lengths independently, with probabilities $\pmin > 0$ and $\pmax > 0$ respectively, where $\pmin+\pmax=1$.
\end{itemize}

\paragraph{\bf Packet bit errors.}
We consider an adversary that controls the bit errors of the packets transmitted over the link. An adversarial error pattern is defined as a collection of error events on the link caused by the adversary.
More precisely, an error event at time $t$ specifies that an instantaneous error occurs on the link at time $t$, so the packet that happens to be on the link at that time is corrupted with bit errors.
A corrupted packet transmission is unsuccessful, therefore the packet needs
to be retransmitted in full. 
As mentioned before, we consider an {\em instantaneous feedback} mechanism for the notification of the sender about the error. The instant the packet suffers a bit error the sending station is notified (and hence it can stop transmitting the remainder of the packet -- if any). 

\vspace{-.5em}
\paragraph{\bf The power of the adversary.}
Adversarial models are typically used to argue about the algorithm's behavior in worst-case scenarios.
In this work we assume an adaptive adversary that knows the algorithm and 
the history of the execution up to the current point in time.
In the case of stochastic arrivals, this includes all stochastic packet arrivals up to this point, and the length of the packets that have arrived.
However it only knows the distribution but neither 
the exact timing nor the length of the packets arriving beyond the current time.

Note that in the case of deterministic algorithms,
in the model of adversarial arrivals the adversary has full knowledge of the computation,
as it controls both packet arrivals and errors, and can simulate the behavior of the algorithm
in the future (there are no random bits involved in the computation).
This is not the case in the model with stochastic arrivals, where the adversary does not control the timing of future packet arrivals, but knows only about the packet arrival and length distributions.

\paragraph{\bf Efficiency metric: \em Relative throughput.} 
Due to dynamic packet arrivals and adversarial errors, the real link capacity may vary throughout the execution.
Therefore, we view the problem of packet scheduling in this setting as an online problem and we pursue long-term competitive analysis. 
Specifically,
let $A$ be an arrival pattern and $E$ an error pattern. 
For a given deterministic algorithm $\Alg$, 
let $L_{\Alg}(A,E,t)$ be the total length of all the successfully transferred (i.e., non-corrupted) packets by time $t$
under patterns $A$ and $E$.
Let $\OPT$ be the offline optimal algorithm that knows the exact arrival and error patterns before the start of the execution.
We assume that \OPT devises an optimal schedule that maximizes at each time $t$ the successfully transferred packets $L_{\OPT}(A,E,t)$.
Observe that, in the case of stochastic arrivals, the worst-case adversarial error pattern may depend on stochastic injections. Therefore, we view $E$ as a function of an arrival pattern $A$ and time $t$.
In particular, for an arrival pattern $A$ we consider a function $E(A,t)$ that defines errors at time $t$ based on the behavior of a given algorithm \Alg under the arrival pattern $A$ up to time $t$ and the values of function $E(A,t')$ for $t'<t$.

Let ${\mathcal A}$ denote a considered arrival model, i.e., a set of arrival patterns in case of adversarial, or a distribution of packet injection patterns in case of stochastic,
and let ${\mathcal E}$ denote the corresponding adversarial error model, i.e., a set of error patterns derived by the adversary, or a set of functions defining the error event times in response to the arrivals that already took place in case of stochastic arrivals.
In case of adversarial arrivals, we require that any pair of patterns $A\in {\mathcal A}$ and $E\in {\mathcal E}$ occurring in an execution must allow non-trivial communication, i.e., the value of $L_{\OPT}(A,E,t)$ in the execution is unbounded with $t$ going to infinity.
In case of stochastic arrivals, we require that any adversarial error function $E\in {\mathcal E}$ applied in an execution must allow non-trivial communication for any stochastic arrival pattern $A\in {\mathcal A}$.

\sloppy For arrival pattern $A$, adversarial error function $E$ and time $t$, we define the \emph{relative throughput $T_{\Alg}(A,E,t)$ of a deterministic algorithm $\Alg$ by time $t$} as: 
\[
T_{\Alg}(A,E,t) = \frac{L_{\Alg}(A,E,t)}{L_{\OPT}(A,E,t)}
\ .
\]
For completeness, $T_{\Alg}(A,E,t)$ equals 
1 if $L_{\Alg}(A,E,t)=L_{\OPT}(A,E,t)=0$.

We define the {\em relative throughput} of algorithm $\Alg$
in the adversarial arrival model as: 
\[
T_{\Alg} = \inf_{A \in {\mathcal A}, E  \in {\mathcal E}} \lim_{t \rightarrow \infty} T_{\Alg}(A,E,t)
\ ,
\]
while in the stochastic arrival model it needs to take into account the random distribution of
arrival patterns in ${\mathcal A}$, and is defined as follows:
\[
T_{\Alg} = \inf_{E  \in {\mathcal E}} \lim_{t \rightarrow \infty} \E_{A \in {\mathcal A}} [T_{\Alg}(A,E,t)]
\ .
\]

To prove lower bounds on relative throughput, we compare the performance of a given algorithm with that of \OPT. 
When deriving upper bounds, it is not necessary to compare the performance of a given algorithm with that
of OPT, but instead, with the performance of some carefully chosen offline algorithm \OFF. 
As we demonstrate later, this approach leads to accurate
upper bound results.

Finally, we consider {\em work conserving} online scheduling algorithms, in the following sense:
as long as there are pending packets, the sender does not cease to schedule packets.
Note that it does not make any difference whether one assumes that offline algorithms are work-conserving or not, since their throughput is the same in both cases (a work conserving offline algorithm always transmits, but stops the ongoing transmission as soon as an error occurs and then continues with the next packet). Hence for simplicity we do not assume offline algorithms to be work conserving.


\section{Adversarial Arrivals}
\label{sec:adversarial}

This section focuses on adversarial packet arrivals.
First, observe that it is relatively easy and efficient to handle packets of only one length.

\begin{proposition}
\label{t:inst-1len}
Any work conserving online scheduling algorithm with instantaneous feedback has optimal relative throughput of 1 when all packets have the same length.
\end{proposition}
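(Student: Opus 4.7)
The plan is to show the stronger statement that for \emph{every} adversarial arrival pattern $A$, every error pattern $E$, and every time $t$, a work-conserving online algorithm $\Alg$ with instantaneous feedback satisfies $L_{\Alg}(A,E,t) = L_{\OPT}(A,E,t)$; since all packets have the same length $\ell$, this reduces to arguing that the \emph{number} of successfully delivered packets by time $t$ is the same, and once that is established the relative throughput is identically $1$.

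The argument would split into two cases depending on the state of $\Alg$ at time $t$. In the easy case, $\Alg$ has no pending packets at $t$; being work conserving, it must then have successfully delivered every packet that has arrived so far, and $\OPT$ cannot do more than that, so equality is immediate. In the harder case, $\Alg$ is mid-transmission at $t$, and I would define $t_0$ to be the start of the current busy period (the most recent idle time of $\Alg$, or $0$). At $t_0$ the easy case applies, so $n_{\Alg}(t_0) \geq n_{\OPT}(t_0)$, and it only remains to compare what happens on the interval $(t_0,t]$.

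On $(t_0,t]$ I would label the error events $s_1 < s_2 < \cdots < s_e$ (with $s_0 = t_0$ and $s_{e+1} = t$) and look at the resulting error-free sub-intervals of lengths $L_i = s_{i+1}-s_i$. The key observation is that any successful packet transmission, by any scheduler, occupies an uninterrupted window of length $\ell$ lying inside one of these sub-intervals, and at most $\lfloor L_i/\ell\rfloor$ disjoint such windows fit in a sub-interval of length $L_i$, regardless of where one chooses to start them. Hence $n_{\OPT}(t) - n_{\OPT}(t_0) \leq \sum_{i=0}^{e} \lfloor L_i/\ell\rfloor$. On the other hand, since $\Alg$ is work conserving and instantaneous feedback lets it restart exactly at each error time, during a busy period it begins a fresh transmission at $s_i$ and then back-to-back every $\ell$ time units, so it achieves exactly $\lfloor L_i/\ell\rfloor$ successes in sub-interval $i$ (packets are interchangeable since they share the same length, so there is always a pending one to send throughout the busy period). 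Summing, $\Alg$ matches the upper bound on $\OPT$ on $(t_0,t]$, and combined with $n_{\Alg}(t_0)\geq n_{\OPT}(t_0)$ this gives $n_{\Alg}(t)\geq n_{\OPT}(t)$, whence equality by optimality of $\OPT$.

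The only real obstacle I expect is the bookkeeping for boundary cases, in particular what happens when an error occurs exactly at the end of an ongoing transmission and whether that last transmission counts as successful. This is a measure-zero convention that does not affect the per-sub-interval count $\lfloor L_i/\ell\rfloor$ for either $\Alg$ or $\OPT$, so any consistent choice works; I would just fix the convention at the start and note that the matching structural bound for both algorithms is unaffected.
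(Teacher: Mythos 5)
Your proof is correct and formalizes the same idea the paper states in two sentences: with instantaneous feedback and equal packet lengths, a work-conserving online algorithm loses nothing to \OPT, since each error-free stretch of length $L$ permits at most $\lfloor L/\ell\rfloor$ successful transmissions by \emph{any} scheduler and the online algorithm attains exactly that, while outside busy periods it has already cleared its queue. The paper leaves this counting implicit ("it is not difficult to see\dots"), whereas you make it precise via the busy-period and error-free sub-interval decomposition.
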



\begin{proof}
Consider an algorithm \Alg. Since it is work conserving, as long
as there are pending packets, it schedules them. If an error is reported by
the feedback mechanism, the algorithm simply retransmits another (or the same) packet. Since the
notification is instantaneous, it is not difficult to see that the a priori knowledge that 
the offline optimal algorithm has, does not help in transmitting more non-corrupted packets than \Alg.
\end{proof}

\subsection{Upper Bound}
\label{sec:advUB}

Let $\Alg$ be any deterministic algorithm for the considered packet scheduling problem. 
In order to prove upper bounds, \Alg will be competing with an offline algorithm $\OFF$. 
The scenario is as follows.
We consider an infinite supply of packets of length $\lmax$ and initially assume that there are no packets of length $\lmin$.
We define as a {\em link error event}, the point in time when the adversary corrupts (causes an error to) any packet that happens to be in  the link at that specific time. We divide the execution in \emph{phases}, defined as the periods between two consecutive link error events.
We distinguish 2 types of phases as described below and give a description for the behavior of the adversarial models $\mathcal{A}$ and $\mathcal{E}$. 
The adversary controls the arrivals of packets at the sending station and error events of the link, as well as the actions of algorithm $\OFF$.
The two types of phases are as follows:
\begin{enumerate}[leftmargin=5mm]
\item a phase in which $\Alg$ starts by transmitting an $\lmax$ packet (the first phase of the execution belongs to this class). Immediately after $\Alg$ starts transmitting the $\lmax$ packet, a set of $\rhoover$ $\lmin$-packets arrive, that are scheduled and transmitted by $\OFF$. After $\OFF$ completes the transmission of these packets, a link error occurs, so $\Alg$ cannot complete the transmission of the $\lmax$ packet (more precisely, the packet undergoes a bit error, so it needs to be retransmitted). Here we use the fact that $\rhoover < \rho$.
\item a phase in which $\Alg$ starts by transmitting an $\lmin$ packet. In this case, $\OFF$ transmits an $\lmax$ packet. Immediately after this transmission is completed,
a link error occurs. Observe that in this phase $\Alg$ has transmitted successfully several $\lmin$ packets (up to $\rhoflr$ of them).
\end{enumerate}

Let $A$ and $E$ be the specific adversarial arrival and error patterns in an execution of \Alg.
Let us consider any time $t$ (at the end of a phase for simplicity) in the execution. Let $p_1$ be the number of phases of type 1 executed by time $t$. Similarly, let $p_2(j)$ be the number of phases of type 2 executed by time $t$ in which \Alg transmits $j$ $\lmin$ packets, for $j \in [1,\rhoflr]$. Then, the relative throughput can be computed as follows.
\begin{eqnarray}
T_{\Alg}(A,E,t)= 
\frac{ \lmin \sum_{j=1}^{\rhoflr} j p_2(j) }
{\lmax \sum_{j=1}^{\rhoflr} p_2(j) + \lmin \rhoover p_1 } \cdot
\label{eq-rel-thr}
\end{eqnarray}

From the arrival pattern $A$, the number of $\lmin$ packets injected by time $t$ is exactly $\rhoover p_1$. Hence, $\sum_{j=1}^{\rhoflr} j p_2(j) \leq \rhoover p_1$.
It can be easily observed from Eq.~\ref{eq-rel-thr} that the \added[ez]{relative} throughput increases with the average number of $\lmin$ packets transmitted in the phases of type 2. Hence, the throughput would be maximal if all the $\lmin$ packets are used in phases of type 2 with $\rhoflr$ packets. With the above we obtain the following theorem.

\begin{theorem}
\label{t:inst-adv-upper}
The relative throughput of $\Alg$
under adversarial patterns $A$ and $E$ and up to time $t$ is at most  $\frac{\rhoflr}{\rho + \rhoflr} \le \frac{1}{2}$ (the equality holds iff $\rho$ is an integer).
\end{theorem}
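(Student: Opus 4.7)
The plan is to start directly from the expression for $T_{\Alg}(A,E,t)$ given in Eq.~\eqref{eq-rel-thr} and extract the bound by two successive maximizations. The key observation, already flagged right after the equation, is that the numerator depends only on the total number $S=\sum_{j=1}^{\rhoflr} j\,p_2(j)$ of $\lmin$ packets successfully transmitted by $\Alg$, while the denominator depends, besides on the fixed quantity $\lmin \rhoover p_1$, only on $\sum_{j=1}^{\rhoflr} p_2(j)$. So for a fixed $S$, maximizing throughput is equivalent to minimizing the number of type~2 phases used to deliver those $S$ packets.

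First I would argue that, for any fixed value of $S$, the minimum of $\sum_j p_2(j)$ subject to $\sum_j j\,p_2(j)=S$ and $j\in[1,\rhoflr]$ is attained by concentrating all mass on $j=\rhoflr$, giving $\sum_j p_2(j)\ge S/\rhoflr$. Substituting this lower bound into Eq.~\eqref{eq-rel-thr} yields
\begin{equation*}
T_{\Alg}(A,E,t) \;\le\; \frac{\lmin S}{(\lmax/\rhoflr)\,S \;+\; \lmin \rhoover p_1}\ .
\end{equation*}
Next I would check that the right-hand side is monotonically increasing in $S$: a quick derivative (or just noting that both numerator and denominator are affine in $S$ with positive slopes and the numerator has zero constant term while the denominator has a positive constant term) confirms this. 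Hence the bound is largest when $S$ is as large as possible, i.e.\ when $S=\rhoover p_1$, which is exactly the constraint already derived from the arrival pattern ($\rhoover p_1$ being the total number of $\lmin$ packets ever injected).

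Plugging $S=\rhoover p_1$ into the displayed expression and simplifying using $\rho=\lmax/\lmin$ gives
\begin{equation*}
T_{\Alg}(A,E,t) \;\le\; \frac{\lmin \rhoover p_1}{(\lmax/\rhoflr)\,\rhoover p_1 + \lmin\rhoover p_1} \;=\; \frac{\rhoflr}{\rho+\rhoflr}\ .
\end{equation*}
Finally, since $\rhoflr=\lfloor\rho\rfloor\le\rho$, we have $\rhoflr/(\rho+\rhoflr)\le 1/2$, with equality exactly when $\rhoflr=\rho$, i.e.\ when $\rho$ is an integer. This yields the claimed bound.

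I do not expect any serious obstacle here: the structural work (defining the two phase types, writing $T_{\Alg}$ in terms of $p_1$ and the $p_2(j)$'s, and establishing the injection constraint $\sum_j j\,p_2(j)\le \rhoover p_1$) is already done in the preceding paragraphs. The only delicate point is to make the two ``obvious'' monotonicity claims explicit and in the right order, namely that for fixed $S$ the denominator is minimized by concentrating mass at $j=\rhoflr$, and that once that reduction is applied, the resulting ratio is increasing in $S$ so one may take $S$ to its combinatorial upper bound $\rhoover p_1$.
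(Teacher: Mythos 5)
Your proof is correct and follows essentially the same route as the paper's: you bound $\sum_j p_2(j)$ from below by $S/\rhoflr$ (the paper writes this as $\sum_j p_2(j) \ge \sum_j j\,p_2(j)/\rhoflr$), observe the resulting ratio is increasing in $S$, and then substitute the injection constraint $S \le \rhoover p_1$. The only difference is that you state the monotonicity steps more explicitly and also spell out the ``equality iff $\rho$ integer'' tail, which the paper leaves implicit.
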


\begin{proof}
Applying the bound $\sum_{j=1}^{\rhoflr} p_2(j) \geq \sum_{j=1}^{\rhoflr} \frac{j p_2(j)}{\rhoflr}$ in Eq (1), we get
\begin{equation*}
T_{\Alg}(A,E,t) \leq \frac{\lmin \sum_{j=1}^{\rhoflr} j p_2(j)}{\frac{\lmax}{\rhoflr} \sum_{j=1}^{\rhoflr} j p_2(j) + \lmin\rhoover p_1},
\end{equation*}
which is a function that increases with $\sum_{j=1}^{\rhoflr} j p_2(j)$.
Since
$\sum_{j=1}^{\rhoflr} j p_2(j) \leq \rhoover p_1$,
the relative throughput can be bounded as
\begin{eqnarray*}
T_{\Alg}(A,E,t) & \leq &
\frac{ \lmin \rhoflr \rhoover p_1/ \rhoflr }
{\lmax \frac{\rhoover p_1}{\rhoflr} + \lmin \rhoover p_1 } 
=  \frac{ \lmin \rhoflr }{\lmax  + \lmin  \rhoflr}
= \frac{\rhoflr}{\rho + \rhoflr}.
\end{eqnarray*}
\end{proof}

\subsection{Lower Bound and \Ralg Algorithm}
\label{sec:ralg}

Two natural scheduling policies one could consider are the \emph{Shortest Length} ($\SL$) and
\emph{Longest Length} ($\LL$) algorithms; the first gives priority to
$\lmin$ packets, whereas the second gives priority to the $\lmax$ packets.
However, these two policies are not efficient in the considered setting; $LL$ cannot achieve a relative throughput more than $0$ while $SL$ achieves at most $T = \frac{1}{\rho + 1}$.
Therefore, we present algorithm \Ralg that tries to combine, in a graceful and
efficient manner, these two policies.

\vspace*{-1.5ex}
\paragraph{Algorithm description:}
At the beginning of the execution and whenever the sender is (immediately) notified by the instantaneous feedback mechanism that a link error occurred, 
it checks the queue of pending packets to see whether there are at least $\rhoflr$ packets of length $\lmin$ available for transmission. If there are, then it schedules $\rhoflr$ of them --- this is called a \emph{preamble} --- and then the algorithm continues to schedule packets using the $\LL$ policy.  Otherwise, if there are not enough $\lmin$ packets available, it simply schedules packets following the $\LL$ policy.
\vspace*{1.5ex}


\vspace*{-1.5ex}
\paragraph{Algorithm analysis:}
We show that algorithm \Ralg achieves a relative throughput that matches the upper bound shown in the previous subsection, and hence, it is optimal.
Let us define two types of time periods for the link in the executions of algorithm \Ralg: the \emph{active} and the \emph{inactive} periods. An active period is one in which the link experiences no errors and \Ralg has pending packets waiting to be transferred, whereas an inactive one is such that either the link has an error point or the queue of pending packets is empty for \Ralg. 
In the case of inactive periods, note that, if the link has an error, neither \Ralg nor \OPT can make any progress in transmitting an error-free packet. Similarly, if the queue of pending packets is empty for \Ralg, it must be empty for \OPT as well (otherwise it would contradict the optimality of \OPT).
Hence, we look at the active periods, which we refer to as \emph{phases}, and according to the above algorithm we observe that there are four types of phases that may occur.
\begin{packed_enum}
\item Phase starting with $\lmin$ packet and has length $L < \rhoflr \lmin$
\item Phase starting with $\lmin$ packet and length $L \geq \rhoflr \lmin$
\item Phase starting with $\lmax$ packet and has length $L < \lmax$
\item Phase starting with $\lmax$ packet and length $L \geq \lmax$
\end{packed_enum}
We now introduce some notation that will be used throughout the analysis.
For the execution of \Ralg and within the $i$th phase, let $a_i$ be the number of successfully transmitted $\lmin$ packets not in the preambles, $b_i$ the number of successfully transmitted $\lmax$ packets, and $c_i$ the number of successfully transmitted $\lmin$ packets in preambles.
For the execution of \OPT and within the $i$th phase, let $a_i^*$ be the total number of successfully transmitted $\lmin$ packets and $b_i^*$ the total number of successfully transmitted $\lmax$ packets.
Let $C_A^j(i)$ and $C_O^j(i)$ denote the total amount successfully transmitted within a phase $i$ of type $j$ by \Ralg and \OPT, respectively.

Analyzing the different types of phases we make some observations. First, for phases of type 1, \Ralg is not able to transmit successfully the $\rhoflr$  $\lmin$ packets of the preamble, but \OPT is only able to complete at most as much work, so $C_O^1 \leq C_A^1$. For phases of type 2, we observe that the amount of work completed by \OPT minus the work completed by \Ralg is at most $\lmax$ (i.e., $C_O^2 - C_A^2 < \lmax$). 
Therefore, $C_O^2 \leq \frac{\lmin \rhoflr}{\lmax+\lmin \rhoflr} C_A^2$. (Observe that $\frac{\lmin \rhoflr}{\lmax+\lmin \rhoflr} \leq 1/2$.)
The same holds for phases of type 4 ($C_O^4 - C_A^4 < \lmax$) and hence in this case  $C_O^4 \leq 2C_A^4$. In the case of phases of type 3, \Ralg is not able to transmit successfully any packet, and therefore $C_A^3=0$, whereas \OPT might transmit up to $\rhoover \lmin$ packets.

\noindent There are two cases of executions to be considered separately.

\noindent{\bf Case 1:} The number of phases of type 3 is finite.\\
\noindent In such a case, there is a phase $i^*$ such that $\forall i>i^*$ phase $i$ is not of type 3. Then
\begin{equation}
R_1 = \frac{\sum\limits_{j\leq i^*}C_A(j) + \sum\limits_{j>i^*}C_A(j)}{\sum\limits_{j\leq i^*}C_O(j) + \sum\limits_{j>i^*}C_O(j)}
\end{equation}

It is clear that the total progress completed by the end of phase $i^*$ by both algorithms is bounded. So we define $\sum\limits_{j\leq i^*}C_A(j) = A$ and $\sum\limits_{j\leq i^*}C_O(j) = O$ and thus,
\begin{equation*}
R_1 = \frac{A + \sum\limits_{j>i^*}C_A(j)}{O + \sum\limits_{j>i^*}C_O(j)} \geq
\frac{A + \frac{\lmin \rhoflr}{\lmax+\lmin \rhoflr} \sum\limits_{j>i^*}C_O(j)}{O + \sum\limits_{j>i^*}C_O(j)}
\end{equation*}
Hence, the relative throughput of \Ralg at the end of each phase, can be computed as $T = \lim_{t \rightarrow \infty}R_1$, i.e.,
\begin{eqnarray*}
T & = &
\lim_{j \rightarrow \infty} \frac{A + \frac{\lmin \rhoflr}{\lmax+\lmin \rhoflr} \sum\limits_{j>i^*}C_O(j)}{O + \sum\limits_{j>i^*}C_O(j)} \\
& = &
\lim_{j \rightarrow \infty} \frac{(\lmax+\lmin \rhoflr)A + (\lmin \rhoflr) \sum\limits_{j>i^*}C_O(j)}{(\lmax+\lmin \rhoflr)(O + \sum\limits_{j>i^*}C_O(j))} \\
&= & 
\lim_{j \rightarrow \infty}(\frac{\lmin \rhoflr}{\lmax+\lmin \rhoflr} + \frac{(\lmax+\lmin \rhoflr)A - (\lmin \rhoflr)O}{(\lmax+\lmin \rhoflr)(O + \sum\limits_{j>i^*}C_O(j))})\\
&= &
\frac{\lmin \rhoflr}{\lmax+\lmin \rhoflr} = \frac{\rhoflr}{\rho + \rhoflr}
\end{eqnarray*}

Here it is important to note that the assumption $\lim_{t \rightarrow \infty}C_O(t) = \infty$ is used, which corresponds to the expression $\lim_{j \rightarrow \infty}\sum\limits_{j>i^*}C_O(j)$ in the above equality.

So far, we have basically seen what is the relative throughput of \Ralg at the end of each phase. It is also important to guarantee the lower bound at all times within the phases.
Consider any time-point $t$ of phase $i>i^*$. Then $R_i(t) = \frac{\sum_{j \in(i^*, i-1]}C_A(j) + X_t}{\sum_{j \in(i^*, i-1]}C_O(j) + Y_t}$, where $X_t$ and $Y_t$ is the work completed by \Ralg and \OPT within phase $i$ up to time $t$. Using our proof above and the fact that for phases of type 1, 2 and 4 $C_A \geq \frac{\lmin \rhoflr}{\lmax+\lmin \rhoflr} C_O$, we know that $X_t \geq \frac{\lmin \rhoflr}{\lmax+\lmin \rhoflr} Y_t$ as well. Therefore,
\begin{eqnarray*}
R_i(t) & \geq & \frac{\frac{\lmin \rhoflr}{\lmax+\lmin \rhoflr} \sum_{j \in(i^*, i-1]}C_O(j) + \frac{\lmin \rhoflr}{\lmax+\lmin \rhoflr} Y_t}{\sum_{j \in(i^*, i-1]}C_O(j) + Y_t}\\
& = & \frac{\lmin \rhoflr}{\lmax+\lmin \rhoflr}
\end{eqnarray*}
This completes the lower bound of relative throughput for Case 1.\\

\noindent{\bf Case 2:} The number of phases of type 3 is infinite.\\
\noindent In this case we must see how the number of $\lmin$ and $\lmax$ packets are bounded for both \Ralg and \OPT.

\begin{lemma}
\label{l:lmin}
Consider the time point $t$ at the beginning of a phase $j$ of type 3. Then the number of $\lmin$ tasks completed by $t$ by \OPT is no more than the amount of $\lmin$ tasks completed by \Ralg plus $\rhoflr - 1$, i.e., $\sum_{i<j}a_i^* \leq \sum_{i<j}(a_i + c_i) + (\rhoflr - 1)$.
\end{lemma}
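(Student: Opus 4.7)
The plan is to exploit the key trigger condition of algorithm \Ralg: a phase starts with an $\lmax$ packet exactly when \Ralg decides \emph{not} to emit a preamble, which happens iff fewer than $\rhoflr$ $\lmin$-packets are sitting in the queue at that moment. So phase $j$ being of type 3 directly reveals that the number of pending $\lmin$-packets in \Ralg's queue at time $t$ is at most $\rhoflr - 1$.

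Next, I would relate this bound to the cumulative arrival count. Let $N(t)$ denote the total number of $\lmin$-packets that have arrived at the sender by time $t$ (under arrival pattern $A$). Because \Ralg is work-conserving and at the instant $t$ marking the start of a new active phase it is not in the middle of transmitting any packet (the previous phase ended either because of a link error that invalidated the packet being transmitted, or because the queue became empty), every $\lmin$-packet that has arrived by $t$ is either (i) already counted in \Ralg's successful transmissions, giving $\sum_{i<j}(a_i + c_i)$, or (ii) still sitting in the queue. Combining this decomposition with the queue bound from the previous paragraph yields
\[
N(t) \;\le\; \sum_{i<j}(a_i + c_i) + (\rhoflr - 1).
\]

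Finally, I would observe that any algorithm, including \OPT, can only successfully transmit packets that have already arrived at the sender, so $\sum_{i<j} a_i^* \le N(t)$. Chaining the two inequalities delivers the claim. The main thing to be careful about is step (ii): one must verify that at the precise moment $t$ beginning a phase \Ralg is not ``holding'' a packet mid-transmission that would otherwise be uncounted --- this is exactly where the structure of the phase boundaries (right after a link error event or on exit from an empty queue) is used, and where the preamble-trigger rule of \Ralg is consulted so that the pending $\lmin$-count at that instant is well-defined and bounded by $\rhoflr - 1$.
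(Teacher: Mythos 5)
Your proof is correct and follows essentially the same route as the paper's: both arguments hinge on the observation that a type-3 phase can only begin when fewer than $\rhoflr$ packets of length $\lmin$ are pending in \Ralg's queue, and then bound \OPT's completed $\lmin$-count by \Ralg's completed count plus those at most $\rhoflr-1$ pending packets. Your version merely makes the intermediate accounting through the arrival count $N(t)$ explicit, which the paper leaves implicit.
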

\begin{proof}
Consider the beginning of phase $j$ of type 3. At that point, we know that \Ralg has at most $(\rhoflr - 1)$ $\lmin$ tasks in its queue of pending tasks by definition of phase type 3. Therefore, the amount of $\lmin$ tasks completed by \OPT by the beginning of phase $j$ is no more than the ones completed by \Ralg (including the $\lmin$ tasks in preambles) plus $\rhoflr - 1$.
\end{proof}

\begin{lemma}
\label{l:lmax}
Considering all kinds of phases and the number of $\lmax$ tasks, 
$\sum\limits_{i\leq j}b_i^* \leq \sum\limits_{i\leq j}b_i + \sum\limits_{i\leq j}\frac{c_i}{\rhoflr} + 2, \forall j$
\end{lemma}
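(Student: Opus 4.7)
The plan is to compare $b_i^*$ against $b_i + c_i/\rhoflr$ phase by phase and sum, using the four phase types introduced in the $\Ralg$ analysis above. The geometric fact that drives the bookkeeping is $\rhoflr\lmin \leq \lmax$.

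For phases of types 1 and 3 the argument is immediate: the phase length $d_i$ is strictly less than $\lmax$ (in type 1 because $d_i < \rhoflr\lmin \leq \lmax$, in type 3 by definition), so OPT cannot successfully complete any $\lmax$ packet in the phase and $b_i^* = 0$; the per-phase inequality $b_i^* \leq b_i + c_i/\rhoflr$ then holds trivially. For phases of type 2 the preamble completes, so $c_i = \rhoflr$ and $c_i/\rhoflr = 1$; here $\Ralg$ spends the first $\rhoflr\lmin$ units of the phase on the preamble and then runs LL for the remaining $d_i - \rhoflr\lmin$ time, fitting at most $\lfloor(d_i - \rhoflr\lmin)/\lmax\rfloor$ long packets (when $\lmax$ packets are not scarce), while OPT fits at most $\lfloor d_i/\lmax\rfloor$. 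Since $\rhoflr\lmin \leq \lmax$, these two floors differ by at most $1$, so $b_i^* - b_i \leq 1 = c_i/\rhoflr$. For phases of type 4 we have $c_i = 0$ and both algorithms schedule $\lmax$ greedily under LL, which yields $b_i^* \leq b_i$ directly whenever the $\lmax$-queues of the two algorithms coincide.

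I would then assemble these per-phase estimates by induction on $j$, maintaining the invariant $\sum_{i \leq j} b_i^* \leq \sum_{i \leq j} b_i + \sum_{i \leq j} c_i/\rhoflr + 2$. The inductive steps for types 1, 2 and 3 follow from the per-phase bounds above combined with the previous step of the induction. The additive $+2$ is slack used to absorb two boundary effects: (i) because OPT is offline it can enter a phase with a different $\lmax$-queue state than $\Ralg$ (by deferring $\lmax$ transmissions to longer later phases), so a bounded transient imbalance in successful $\lmax$ transmissions may occur; and (ii) at the observation time OPT may have one partially transmitted $\lmax$ packet in flight, with a symmetric transient possible at the very first active phase.

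The main obstacle is the type 4 step in the presence of such cross-phase imbalances: when $\Ralg$ is availability-limited on $\lmax$ while OPT still has long packets stored from earlier deferrals, one must split on whether $\Ralg$ is time-limited or availability-limited in phase $i$ and argue that any queue-state discrepancy OPT has accumulated is already accounted for by $\sum_{k < i} c_k/\rhoflr$ plus the constant $2$, so the invariant is preserved. Once this case analysis is resolved, the stated inequality follows immediately for every $j$.
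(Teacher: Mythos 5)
Your decomposition into the four phase types and the per-phase bounds for types 1, 2 and 3 match the paper's accounting (the paper likewise observes that in a phase where $\lmax$ packets are never scarce for \Ralg, \OPT completes at most $\lfloor L_j/\lmax\rfloor \le b_j + c_j/\rhoflr$ long packets). But the step you yourself flag as ``the main obstacle'' --- the phase in which \Ralg becomes availability-limited on $\lmax$ while \OPT still has long packets to send --- is exactly the crux of the lemma, and your proposal does not contain an argument for it; it only asserts that the accumulated discrepancy ``is already accounted for'' by the earlier preamble terms plus the constant $2$. As written, that is a restatement of the invariant to be proved, not a proof, so the induction does not close.

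The paper's resolution is a reset argument that you are missing. In a phase $j$ during which \Ralg at some point has no pending $\lmax$ packets, let $t$ be the \emph{last} such time in the phase. At time $t$ every $\lmax$ packet injected so far has already been successfully transmitted by \Ralg, so the cumulative counts satisfy $b^*(t)\le b(t)$ outright --- all history before $t$, including whatever $\lmax$ packets \OPT may have deferred across earlier phases, is wiped out by this single observation, with no need to charge it against $\sum_{k<i}c_k/\rhoflr$. From $t$ to the end of the phase one then only needs a length bound: the remaining interval is shorter than $\rhoflr\lmin + (x_j(t)+1)\lmax \le (x_j(t)+2)\lmax$, which gives $x_j^*(t)\le x_j(t)+2$ and hence $\sum_{i\le j}b_i^*\le \sum_{i\le j}b_i+2$ for such a phase. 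The complementary case (ALG has $\lmax$ packets pending throughout phase $j$) is handled by your per-phase bound $b_j^*\le b_j+c_j/\rhoflr$ together with the induction hypothesis. Without the reset step, your induction would have to propagate an unquantified ``queue-state discrepancy'' through type-4 phases, and there is no per-phase inequality that does this; so the gap is genuine, not merely a presentational one.
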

\begin{proof}
We prove this claim by induction on phase $j$.
For the \emph{Base Case: $j=0$} the claim is trivial.
We consider the \emph{Induction Hypothesis} stating that $\sum\limits_{i\leq j-1}b_i^* \leq \sum\limits_{i\leq j-1}b_i + \sum\limits_{i\leq j-1}\frac{c_i}{\rhoflr} + 2$. 
For the \emph{Induction Step} we need to prove it up to the end of phase $j$.
We first consider the case where during the phase $j$ there is a time when \Ralg has no $\lmax$ tasks. Let $t$ be the latest such time in the phase. Let us define $b^*(t)$ and $b(t)$ being the number of $\lmax$ tasks completed up to time $t$ by \OPT and \Ralg respectively. We know that $b^*(t)\leq b(t)$. Let also $x_j^*(t)$ and $x_j(t)$ be the number of $\lmax$ tasks completed by \OPT and \Ralg, respectively, after time point $t$ until the end of the phase $j$.
We claim that $x_j^*(t)\leq x_j(t)+2$. From our definitions, at time $t$ \Ralg is executing a $\lmin$ task. Since $t$ is the last time that \Ralg has no $\lmax$ tasks, the worst case is being at the beginning of the preamble (by inspection of the 4 types of phases). Then, if the phase ends at time $t^\prime$, we define period $I = [t,t^\prime]$:
\begin{eqnarray*}
|I| & < &
\rhoflr\lmin + (x_j(t) + 1)\lmax \\
& \leq &
(x_j(t) + 2)\lmax
\end{eqnarray*}

The +1 $\lmax$ task is because of the crash before completing the last $\lmax$ scheduled task of the phase.
Observe that \OPT could be executing a $\lmax$ task at time $t$, completed at some point in $[t,t+\lmax]$ and accounted for in $x_j^*(t)$. Therefore, 
$$\sum\limits_{i\leq j}b_i^* = b^*(t) + x_j^*(t) \leq b(t) + x_j(t) + 2 = \sum\limits_{i\leq j}b_i + 2.$$

Now consider the case where at all times of a phase $j$ there are $\lmax$ tasks in the queue of \Ralg. 
By inspection of the 4 types of phases, the worst case is when $j$ is of type 2. 
Since there is always some $\lmax$ task pending in \Ralg, after completing the $\rhoflr\lmin$ tasks it will keep scheduling $\lmax$ tasks, until a crash stops the last one scheduled, or the queue becomes empty. On the same time \OPT is able to complete at most $\lfloor\frac{L_j}{\lmax}\rfloor \leq b_j + 1$ $\lmax$-tasks, where $L_j$ is the length of the phase. Therefore, in all types of phases, $b_j^* \leq \frac{c_j}{\rhoflr} + b_j$.
And hence by induction the claim follows; $\sum\limits_{i\leq j}b_i^* \leq \sum\limits_{i\leq j}\frac{c_i}{\rhoflr} + \sum\limits_{i\leq j}b_i + 2$.
\end{proof}

Combining the two lemmas above, Lemma~\ref{l:lmin} and \ref{l:lmax}:
\begin{equation*}
R_2 = \frac{\sum\limits_{i\leq j}C_A(i)}{\sum\limits_{i\leq j}C_O(j)} =
\frac{\sum\limits_{i\leq j}[(a_i+c_i)\lmin + b_i\lmax]}{\sum\limits_{i\leq j}[a_i^*\lmin + b_i^*\lmax]}
\end{equation*}
\begin{eqnarray*}
 &\geq &
\frac{\sum\limits_{i\leq j}[(a_i+c_i)\lmin + b_i\lmax]}{\sum\limits_{i\leq j}(a_i\!+\!c_i)\lmin \!+\! (\rhoflr \!-\! 1)\lmin \!+\! \sum\limits_{i\leq j}(b_i \!+\! \frac{c_i}{\rhoflr})\lmax \!+\! 2\lmax} \\
&\geq &
\frac{\sum\limits_{i\leq j}[(a_i+c_i)\lmin + b_i\lmax]}{\sum\limits_{i\leq j}[(a_i+2c_i)\lmin + b_i\lmax] + 3\lmax} \\
& \geq &
\frac{\sum\limits_{i\leq j}[(a_i+c_i)\lmin + b_i\lmax] + \frac{3}{2}\lmax - \frac{3}{2}\lmax}{2\sum\limits_{i\leq j}[(a_i+c_i)\lmin + b_i\lmax] + 3\lmax} \\
&\geq &
\frac{1}{2} - \frac{\frac{3}{2}\lmax}{2\sum\limits_{i\leq j}[(a_i+c_i)\lmin + b_i\lmax] + 3\lmax} \\
\end{eqnarray*}

Therefore,
\begin{equation}
T = \lim_{j \rightarrow\infty}R_2 \geq \frac{1}{2}
\end{equation}

\remove{
\paragraph{Algorithm analysis (sketch):}
We show that algorithm \Ralg achieves a relative throughput that 
matches the upper bound shown in the previous subsection, and hence, it is optimal.
The line of argumentation is as follows.
According to the algorithm there are four types of phases that may occur.
\begin{packed_enum}
\item Phase starting with $\lmin$ packet and has length $L < \rhoflr \lmin$
\item Phase starting with $\lmin$ packet and length $L \geq \rhoflr \lmin$
\item Phase starting with $\lmax$ packet and has length $L < \lmax$
\item Phase starting with $\lmax$ packet and length $L \geq \lmax$
\end{packed_enum}
For phases of type 1, \Ralg is not able to transmit successfully the $\rhoflr$  packets $\lmin$ of the preamble, but clearly \OPT is only able to complete at most as much {\em work} (understood as the total length of sent packets). 
For phases of type 2 and 4, the amount of work completed by \OPT can be at most the work completed by \Ralg plus $\lmax$ (and hence the
former is at most twice the latter).
In the case of phases of type 3, \Ralg is not able to successfully transmit any packet, whereas \OPT might transmit up to $\rhoover \lmin$ packets.
Amortizing the work completed by \OPT in these phases with those completed in the preambles 
of types~1 and~2 by algorithm \Ralg is the most challenging part of the proof.
This process is divided into two cases, depending on whether the number of type 3 phases is bounded or not.
The details of the analysis of algorithm \Ralg, leading to the following theorem, are included in the Appendix~\ref{a:adv-inst}.
} 

\begin{theorem}
\label{thm:Ralg}
The relative throughput of Algorithm \Ralg is at least $\frac{\rhoflr}{\rho+\rhoflr}$.
\end{theorem}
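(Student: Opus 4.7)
The plan is to split the timeline of the execution into the four phase types enumerated in the algorithm analysis and bound, for each type, the work ratio between \Ralg and \OPT. The key quantities to track, within the $i$th phase, are $a_i,b_i,c_i$ (respectively $\lmin$ packets sent by \Ralg outside a preamble, $\lmax$ packets sent by \Ralg, and $\lmin$ packets sent by \Ralg inside a preamble) and $a_i^*,b_i^*$ for \OPT, together with the per-phase aggregates $C_A^j(i)$ and $C_O^j(i)$.

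First I would record the easy per-phase bounds: (i) in a type 1 phase the preamble is interrupted, so $C_O^1(i)\le C_A^1(i)$; (ii) in type 2 and type 4 phases the extra work \OPT can squeeze in before the error is less than one $\lmax$, which gives $C_O^j(i)\le \frac{\lmax+\lmin\rhoflr}{\lmin\rhoflr}\cdot C_A^j(i)$ for $j\in\{2,4\}$; (iii) in a type 3 phase \Ralg transmits nothing while \OPT may ship up to $\rhoover$ short packets, so this is the only phase in which \OPT gains an unamortised advantage. I would then split the analysis into the two cases already indicated.

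In Case~1 (only finitely many type~3 phases), after the last type~3 phase the cumulative sums $\sum_{j>i^*} C_A(j)$ and $\sum_{j>i^*} C_O(j)$ satisfy $C_A(j)\ge \frac{\lmin\rhoflr}{\lmax+\lmin\rhoflr} C_O(j)$, and the bounded pre-history contribution is washed out in the limit, so the limiting ratio is exactly $\rhoflr/(\rho+\rhoflr)$. To guarantee the bound at arbitrary times inside a phase (not just at phase boundaries), I would observe that the same per-phase inequality still applies to the partial progress $X_t,Y_t$ inside the current active phase (by inspecting each phase type), so the limit is attained uniformly and not just along the discrete phase sequence.

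The main obstacle is Case~2, where infinitely many type~3 phases appear, and we must amortise the $\lmin$ packets that \OPT completes during type~3 phases against the preambles of \Ralg, and similarly amortise the extra $\lmax$ packets \OPT may complete. For this I would prove two structural lemmas: (a) at the start of any type~3 phase the number of $\lmin$ packets completed by \OPT is at most $\sum(a_i+c_i)+(\rhoflr-1)$ — this holds because a type~3 phase starts precisely when \Ralg has fewer than $\rhoflr$ pending short packets, so the "extra" \OPT short packets in between come from \Ralg's own preamble output (up to the current deficit); and (b) for every phase $j$, $\sum_{i\le j} b_i^* \le \sum_{i\le j}(b_i+c_i/\rhoflr)+2$, proved by induction on $j$, splitting on whether \Ralg ever has zero pending $\lmax$ packets during phase $j$ (bounded via the length of the time interval since the last such moment, which cannot exceed $\rhoflr\lmin+(x_j(t)+1)\lmax$) or not (in which case \OPT's long packets in the phase are bounded by $\lfloor L_j/\lmax\rfloor\le b_j+1$, which after amortisation with the preamble length gives $b_j^*\le c_j/\rhoflr+b_j$). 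Plugging (a) and (b) into the ratio $R_2$, the $\lmin$ terms accumulate as $(a_i+2c_i)$ in the denominator against $(a_i+c_i)$ in the numerator, the additive $O(\lmax)$ slack becomes negligible as $j\to\infty$, and straightforward algebra yields $R_2\to \rhoflr/(\rho+\rhoflr)$, matching Case~1 and therefore establishing the theorem.
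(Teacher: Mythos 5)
Your proposal follows the paper's proof essentially step for step: the same four-phase taxonomy and per-phase bounds, the same split into finitely versus infinitely many type-3 phases, and the same two amortisation lemmas (one bounding \OPT's $\lmin$ count at the start of a type-3 phase by $\sum(a_i+c_i)+(\rhoflr-1)$, the other bounding $\sum b_i^*$ by $\sum(b_i+c_i/\rhoflr)+2$ via the same induction with the same sub-case split). The one small quibble is your closing claim that after the $(a_i+2c_i)$ substitution the ratio $R_2$ tends to $\rhoflr/(\rho+\rhoflr)$ — in the worst case ($a_i=b_i=0$) that ratio tends to $1/2$ — but since $1/2 \ge \rhoflr/(\rho+\rhoflr)$ the theorem follows anyway, exactly as in the paper's own write-up.
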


\begin{proof}
From the analyses of Cases 1 and 2 and the fact that 
$\frac{\rhoflr}{\rho + \rhoflr} \leq \frac{1}{2}$
it is easy to conclude that the relative throughput of Algorithm \Ralg is at least $\frac{\rhoflr}{\rho + \rhoflr}$ as claimed.
\end{proof}

\section{Stochastic Arrivals}
\label{sec:stochastic}

We now turn our attention to stochastic packet arrivals.\vspace{-1em} 

\subsection{Upper Bounds}
\label{sec:stocUB}

In order to find the upper bound of the relative throughput, we consider again an arbitrary work conserving algorithm $\Alg$. 
Recall that we assume that $\lambda p > 0$ and $\lambda q > 0$, which implies that there are in fact injections of 
packets of both lengths $\lmin$ and $\lmax$ (recall the definitions of $\lambda$, $p$ and $q$ from Section~\ref{sec:model}). 
We define the following adversarial error model $\mathcal{E}$.

\vspace*{-1.5ex}
\begin{enumerate}[leftmargin=5mm]
\item 
When $\Alg$ starts a phase by transmitting an $\lmax$ packet then,
\begin{enumerate}[leftmargin=5mm]
\item
\label{1a}
If $\OFF$ has $\lmin$ packets pending, then the adversary extends the phase so that \OFF can transmit successfully as many $\lmin$ packets as possible, up to $\rhoover$.
Then, it ends the phase so that $\Alg$ does not complete the transmission of the $\lmax$ packet (since $\rhoover \lmin < \lmax$).\vspace{.2em}
\item
\label{1b}
If $\OFF$ does not have any $\lmin$ packets pending, then the adversary inserts a link error immediately (say after infinitesimally small time $\epsilon$).\vspace{.2em}
\end{enumerate}

\item 
When $\Alg$ starts a phase by transmitting an $\lmin$ packet then,
\begin{enumerate}[leftmargin=5mm]
\item 
\label{2a}
IF $\OFF$ has a packet of length $\lmax$ pending, then the adversary extends the phase so \OFF can transmit an $\lmax$ packet. By the time this packet is successfully transmitted,
the adversary inserts an error and finishes the phase. Observe that in this case \Alg was able to successfully transmit up to $\rhoflr$ packets $\lmin$.\vspace{.2em} 
\item 
\label{2b}
If $\OFF$ has no $\lmax$ packets pending, then the adversary inserts an error immediately and ends the phase.
\end{enumerate}
\end{enumerate}

\vspace*{-1.5ex}
Observe that in phases of type \ref{1b} and \ref{2b}, neither \OFF nor \Alg are able to transmit any packet. These phases are just used by the adversary to wait for the conditions required by phases of type \ref{1a} and \ref{2a} to hold. In these latter types some packets are successfully transmitted (at least by \OFF). Hence we call them \emph{productive} phases.
Analyzing a possible execution, in addition to the concept of phase that we have already used, we define \emph{rounds}. There is a round associated with each productive phase. The round ends
when its corresponding productive phase ends, and starts at the end of the prior round (or at the start of the execution if no prior round exists). Depending on the type of productive phase
they contain, rounds can be classified as type \ref{1a} or \ref{2a}. 

\newcommand{\ronea}{r_{\ref{1a}}^{(j)}}
\newcommand{\rtwoa}{r_{\ref{2a}}^{(j)}}
\newcommand{\rtwoap}[1]{r_{\ref{2a}}^{(#1)}}

Let us fix some (large) time $t$. We denote by $\ronea$ the number of rounds of type \ref{1a} in which $j \leq \rhoover$ packets of length $\lmin$ are sent by \OFF completed by time $t$. The value $\rtwoa$ with  $j \leq \rhoflr$ packets of length $\lmin$ sent by \Alg, is defined similarly for rounds of type \ref{2a}. (Here rounding effects do not have any significant impact,
since they
will be compensated by the assumption that $t$ is large.) We assume that $t$ is a time when a round finishes. Let us denote by $r$ the total number or rounds
completed by time $t$, i.e., $\sum_{j=1}^{\rhoflr} \rtwoa + \sum_{j=1}^{\rhoover} \ronea =r$.

The relative throughput by time $t$ can be computed as
\begin{eqnarray}
T_{\Alg}(A,E,t)= 
\frac{ \lmin \sum_{j=1}^{\rhoflr} j \cdot \rtwoa }
{\lmax \sum_{j=1}^{\rhoflr} \rtwoa + \lmin \sum_{j=1}^{\rhoover} j \cdot \ronea }
\ .
\label{eq-relative-thr}
\end{eqnarray}

From this expression, we can show the following result.

\begin{theorem}
\label{theo:abs-ub}
No algorithm \Alg has relative throughput larger than 
$\frac{\rhoflr}{\rho}$.
\end{theorem}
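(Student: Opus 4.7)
My plan is to exploit the expression for relative throughput in Eq.~(2), which already encodes the adversarial error model $\mathcal{E}$ described just above the theorem. The bound will be essentially algebraic once we verify that this error model is a legal member of $\mathcal{E}$ for any work-conserving $\Alg$, so the first step is that sanity check.

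First, I would argue that the prescribed adversarial strategy produces an error pattern in $\mathcal{E}$ for any work-conserving online algorithm $\Alg$: the adversary only reacts to what $\Alg$ is currently transmitting, it waits for \OFF\ to accumulate the packet length it needs (which happens almost surely because $\lambda p>0$ and $\lambda q>0$), and the non-trivial communication requirement is met because \OFF\ successfully sends $\lmin$ packets in every type~\ref{1a} round and an $\lmax$ packet in every type~\ref{2a} round, so $L_{\OFF}(A,E,t)$, and hence $L_{\OPT}(A,E,t)$, grows unboundedly in $t$.

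Next comes the algebraic core. Fix any (large) $t$ at the end of a round. In Eq.~(2), the term $\lmin \sum_{j=1}^{\rhoover} j \cdot \ronea$ in the denominator is non-negative, so dropping it can only enlarge the ratio; rounds of type~\ref{1a} therefore only hurt $\Alg$. Then, since each index $j$ in the type~\ref{2a} sum satisfies $j\le \rhoflr$, we have
\[
\sum_{j=1}^{\rhoflr} j\cdot \rtwoa \;\le\; \rhoflr \sum_{j=1}^{\rhoflr} \rtwoa,
\]
so the numerator is at most $\lmin \rhoflr$ times $\sum_j \rtwoa$, while the denominator is at least $\lmax \sum_j \rtwoa$. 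Cancelling gives $T_{\Alg}(A,E,t)\le \lmin\rhoflr/\lmax = \rhoflr/\rho$ pointwise along the sequence of round-end times.

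Finally, I would pass to the limit and to the stochastic expectation. The pointwise inequality holds at every round-end time, and by the work-conserving assumption combined with the non-trivial communication argument above, the sequence of round-end times is unbounded almost surely. Hence $\limsup_{t\to\infty} T_{\Alg}(A,E,t)\le \rhoflr/\rho$ for every realization $A$, so $\lim_{t\to\infty}\E_{A\in\mathcal{A}}[T_{\Alg}(A,E,t)]\le \rhoflr/\rho$, and taking the infimum over $E\in\mathcal{E}$ preserves the bound. The only real obstacle is the first step — confirming that the adversary's reactive strategy really does fit the stochastic model $\mathcal{E}$ (in particular that it does not depend on future arrivals) — since the algebra itself is essentially a one-line application of $j\le \rhoflr$.
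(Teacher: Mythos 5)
Your proposal is correct and follows essentially the same route as the paper: both start from Eq.~(2) under the stated adversarial error model, observe that the type~\ref{1a} term in the denominator only lowers the ratio, and bound the type~\ref{2a} contribution using $j\le\rhoflr$ to obtain $\lmin\rhoflr/\lmax=\rhoflr/\rho$. Your explicit inequality chain and the added checks (legality of the reactive adversary, unboundedness of round-end times) merely make rigorous what the paper's monotonicity/exchange argument leaves implicit.
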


\begin{proof}
It can be observed in Eq.~\ref{eq-relative-thr} that, for a fixed $r$, the lower the value of $\ronea$ the higher the relative throughput. 
Regarding the values $\rtwoa$, the throughput increases when 
there are more rounds in the larger values of $j$.
E.g., under the same conditions, a configuration with $\rtwoa = k_1$ and $\rtwoap{j+1} = k_2$, has lower throughput than one with $\rtwoa = k_1-1$ and $\rtwoap{j+1} = k_2+1$.
Then, the throughput is maximized when
$\rtwoap{\rhoflr}=r$ and the rest of values $\ronea$ and $\rtwoa$ are 0, which yields the bound.
\end{proof}

To provide tighter bounds for some special cases, we prove the following lemma.

\begin{lemma}
\label{lem:chernoff}
\label{l:stoch-lmin}
Consider any two constants $\eta,\eta'$ such that
$0 < \eta < \lambda < \eta'$.
Then: 
\vspace{-1.5ex}
\begin{itemize}
\item [(a)]
there is a constant $c>0$, 
dependent only on $\lambda,p,\eta$, such that
for any time $t\ge \lmin$,
the number of packets of length $\lmin$ (resp., $\lmax$) injected by time $t$
is {\em at least} $t\eta p$ (resp., $t\eta q$) with probability at least 
$1-e^{-ct}$;\vspace{.2em}
\item [(b)]
there is a constant $c'>0$, 
dependent only on $\lambda,p,\eta'$, such that
for any time $t\ge \lmin$,
the number of packets of length $\lmin$ (resp., $\lmax$) injected by time $t$
is {\em at most} $t\eta' p$ (resp., $t\eta' q$) with probability at least 
$1-e^{-c't}$.
\end{itemize}
\end{lemma}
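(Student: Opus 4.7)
The plan is to apply standard Chernoff-type concentration inequalities for Poisson random variables to each arrival count separately, after invoking the thinning property of the Poisson process. Specifically, since each arrival of the rate-$\lambda$ Poisson process is labeled independently as $\lmin$ with probability $p$ and as $\lmax$ with probability $q$, the thinning theorem gives that the number $N_1(t)$ of $\lmin$-packets injected in $[0,t]$ is itself Poisson with mean $\mu_1 := \lambda p t$, and likewise $N_2(t) \sim \mathrm{Poisson}(\lambda q t)$. This reduces both parts of the lemma to the well-known fact that a Poisson variable of mean $\mu$ deviates from its mean by a constant multiplicative factor only with probability exponentially small in $\mu$.

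For part (a), I would set $\delta := 1 - \eta/\lambda \in (0,1)$ so that $\eta p t = (1-\delta)\mu_1$, and use the Chernoff lower-tail bound
\[
\Pr\!\left[N_1(t) < \eta p t\right] \;\le\; \exp\!\left(-\delta^2 \mu_1 / 2\right) \;=\; e^{-c t},
\]
with $c := \delta^2 \lambda p / 2 > 0$, which depends only on $\lambda, p, \eta$. The identical argument with $q$ in place of $p$ handles the $\lmax$ count; taking the minimum of the two constants still yields a single positive $c$ that works for both. For part (b), I would set $\delta' := \eta'/\lambda - 1 > 0$ so that $\eta' p t = (1+\delta')\mu_1$, and apply the upper-tail Chernoff bound for Poisson, e.g.\
\[
\Pr\!\left[N_1(t) > \eta' p t\right] \;\le\; \exp\!\left(-(\delta')^2 \mu_1 / (2 + \delta')\right) \;=\; e^{-c' t},
\]
with $c' := (\delta')^2 \lambda p / (2 + \delta') > 0$, and analogously for $\lmax$-packets.

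The only nontrivial ingredient is the Chernoff bound itself, which I would derive in a standard way via the Cramér transform: using the moment generating function $\E[e^{s X}] = \exp(\mu(e^s - 1))$ of a Poisson random variable, Markov's inequality applied to $e^{s X}$ and then optimized over $s$ produces both the lower- and upper-tail exponents above. Since $\lambda, p, q, \eta, \eta'$ are all constants independent of $t$, the resulting exponents $c, c'$ are positive constants as required. The hypothesis $t \ge \lmin$ plays no role in the concentration argument; it merely rules out the vacuous regime in which $t\eta p$ or $t\eta q$ is smaller than a single packet size. The main (routine) obstacle is purely bookkeeping around the precise form of the Chernoff bound and verifying that the positivity of $c$ and $c'$ is preserved when the two cases ($\lmin$ vs.\ $\lmax$) are combined.
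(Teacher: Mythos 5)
Your proposal is correct and takes essentially the same route as the paper: thin the rate-$\lambda$ Poisson process into independent Poisson arrival processes of rates $\lambda p$ and $\lambda q$, then apply a Chernoff tail bound for Poisson random variables. The only cosmetic difference is the form of the bound — you invoke the multiplicative Chernoff inequalities $\Pr[X \le (1-\delta)\mu] \le e^{-\delta^2\mu/2}$ and $\Pr[X \ge (1+\delta')\mu] \le e^{-(\delta')^2\mu/(2+\delta')}$, where the positivity of the exponent is immediate, while the paper uses the raw form $e^{-\mu}(e\mu)^k/k^k$ and then needs the elementary inequality $x > 1 + \ln x$ to verify that the exponent $\lambda - \eta\ln(e\lambda/\eta)$ is positive; both derive from the same moment-generating-function argument and yield the same qualitative conclusion.
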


\begin{proof}
We first prove the statement 1(a).
The Poisson process governing arrival times of packets of length $\lmin$ has parameter $\lambda p$.
By the definition of a Poisson process, 
the distribution of packets of length $\lmin$ arriving to the system in the period $[0,t]$ 
is the Poisson distribution
with parameter $\lambda p t$. Consequently, by Chernoff bound for Poisson random variables
(with parameter $\lambda p t$), c.f., 
\cite{chernoff}, 
the probability that at least $\eta p t$ packets arrive to the system in the period $[0,t]$ 
is at least
\[
1-e^{-\lambda p t} \frac{(e\lambda p t)^{\eta p t}}{(\eta p t)^{\eta p t}} = 
1-e^{-t p (\lambda-\eta\ln(e\lambda/\eta))} \ge 1-e^{-ct}, 
\]
for some constant $c>0$ dependent on $\lambda,\eta,p$.
In the above, the argument behind the last inequality is as follows.
It is a well-known fact that $x>1+\ln x$ holds for any $x>1$; in particular,
for $x=\lambda/\eta>1$. This implies that 
$x-\ln(ex)$ is a positive constant for $x=\lambda/\eta>1$,
and after multiplying it by $\eta>0$ we obtain another positive constant 
equal to $\lambda-\eta\ln(e\lambda/\eta)$ that depends only on $\lambda$ and $\eta$.
Finally, we multiply this constant by $p>0$ to obtain the final constant $c>0$
dependent only on $\lambda,\eta,p$.

The same result for packets of length $\lmax$ can be proved by replacing $p$ by $q=1-p$ in the above analysis. 

Statement 1(b) is proved analogously to the first one, by replacing $\eta$ by $\eta'$.
This is possible because the Chernoff bound for Poisson process has the same form regardless
whether the upper or the lower bound on the Poisson value is considered, 
c.f., \cite{chernoff}.
\end{proof}

Now we can show the following result.

\begin{theorem}
\sloppypar{Let $p < q$. Then, the relative throughput of any algorithm $\Alg$ is at most 
$\min \left\{ \max \left\{\lambda p\lmin, \frac{\rhoflr} {\rho+ \rhoflr} \right\}, \frac{\rhoflr}{\rho} \right\}.$}
\end{theorem}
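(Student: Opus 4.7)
The plan is to apply the adversarial error model $\mathcal{E}$ constructed just before the theorem and bound the ratio using Eq.~\ref{eq-relative-thr}. The outer bound $\rhoflr/\rho$ is already handed to us by Theorem~\ref{theo:abs-ub}, so the only task is to establish $T_{\Alg}\le\max\{\lambda p\lmin,\rhoflr/(\rho+\rhoflr)\}$.

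Abbreviating $u=\sum_{j}j\,\rtwoa$, $R_2=\sum_{j}\rtwoa$, and $v=\sum_{j}j\,\ronea$, Eq.~\ref{eq-relative-thr} becomes $T_{\Alg}(A,E,t)=\lmin u/(\lmax R_2+\lmin v)$. I will rely on three basic inequalities: (i) $u\le\rhoflr R_2$, since each type~\ref{2a} round carries at most $\rhoflr$ short packets for $\Alg$; (ii) $u\le N_{\min}(t)$ and $v\le N_{\min}(t)$, where $N_{\min}(t)$ is the total number of $\lmin$-arrivals by time $t$; and (iii) $\lmax R_2+\lmin v$ equals the total duration of the productive (type~\ref{1a} and~\ref{2a}) phases, hence is bounded by $t$.

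Plugging (i) into the expression for $T_{\Alg}$ gives $T_{\Alg}\le\rhoflr u/(\rho u+\rhoflr v)$, which I will analyze in two cases. If $v\ge u$, the previous inequality immediately yields $T_{\Alg}\le\rhoflr/(\rho+\rhoflr)$. If $v<u$, I will argue that the non-productive time (time spent in type~\ref{1b} and~\ref{2b} phases) is $o(t)$, so that $\lmax R_2+\lmin v\ge t-o(t)$; combining this with Lemma~\ref{lem:chernoff}(b), which gives $N_{\min}(t)\le(\lambda+\epsilon)pt$ with probability $1-e^{-\Omega(t)}$, yields $T_{\Alg}\le\lmin(\lambda+\epsilon)pt/(t-o(t))$, tending to $\lambda p\lmin$ as $t\to\infty$ and $\epsilon\to 0$. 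Taking expectation over the Poisson arrivals and combining the two cases then yields the claimed upper bound.

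The main obstacle will be establishing rigorously that the non-productive time is $o(t)$ in the branch $v<u$. A non-productive phase occurs precisely when $\OFF$ lacks the packet type the adversary's schedule requires; the hypothesis $p<q$ ensures that $\OFF$ faces no long-run scarcity of $\lmax$-packets, so type~\ref{2b} phases are negligible, and the only non-trivial starvation is the transient depletion of $\OFF$'s $\lmin$-queue, controllable by both directions of Lemma~\ref{lem:chernoff}. An alternative, possibly cleaner, route is to split globally on whether $\lambda p\lmin\ge\rhoflr/(\rho+\rhoflr)$: in the dense regime productive time saturates $t$ and yields $T_{\Alg}\le\lambda p\lmin$, while in the sparse regime the adversary can saturate $v\ge u$ and the algebraic bound yields $T_{\Alg}\le\rhoflr/(\rho+\rhoflr)$.
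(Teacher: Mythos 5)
Your setup is faithful to the paper: you invoke the same adversarial error model, the same expression from Eq.~\ref{eq-relative-thr}, the same Chernoff inputs from Lemma~\ref{lem:chernoff}, and you dispatch the outer bound $\rhoflr/\rho$ via Theorem~\ref{theo:abs-ub} exactly as the paper does. The algebra in your branch $v\ge u$ is correct and is in fact a mild streamlining of the paper's Case~1, which uses the stronger fact $\Olmins_{t_i}=\lminarr{t_i}$ at OFF-starvation times; your version only needs $v\ge u$.

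The gap is in the branch $v<u$, and it is a real one: the implication ``$v(t)<u(t)$ at the observation time $t$ implies non-productive time is $o(t)$'' is false, and consequently the bound you aim for there ($T_{\Alg}\le\lambda p\lmin$) can fail. Concretely, take $\Alg=\SL$ with $\lambda p\lmin$ small. Under the adversary of Section~\ref{sec:stocUB}, the execution settles into cycles: an $\lmin$ arrival triggers a type~\ref{2a} phase (duration $\lmax$, $u\!\leftarrow\!u+1$, $\OFF$ sends one $\lmax$), then a type~\ref{1a} phase (duration $\lmin$, $v\!\leftarrow\!v+1$), then a long stretch of type~\ref{1b} phases while everyone waits for the next $\lmin$ arrival. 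In each cycle $u$ and $v$ each increase by one, so $u$ and $v$ stay within $O(1)$ of each other and $v<u$ at roughly half the observation times; yet the type~\ref{1b} time is a $1-\Theta(\lambda p\lmin)$ fraction of $t$, which is $\Theta(t)$, not $o(t)$. Moreover the ratio in this execution tends to $1/(\rho+1)$, which strictly exceeds $\lambda p\lmin$ when $\lambda p\lmin<1/(\rho+1)$. So your intended conclusion for the $v<u$ branch is simply not true. Your alternative global split on $\lambda p\lmin\gtrless\rhoflr/(\rho+\rhoflr)$ has the same difficulty: ``productive time saturates $t$'' in the dense regime and ``the adversary can saturate $v\ge u$'' in the sparse regime are both unproved, and the first is again contradicted by algorithms that voluntarily spend long stretches on $\lmax$.

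The paper's case split is execution-dependent but of a different kind: it partitions on whether $\OFF$'s $\lmin$-queue empties infinitely often. When it does, one takes the subsequence of emptying times, at which $\Olmins=\lminarr$ exactly; this, combined with $\Almins\le\lminarr$ and $\Olmaxs\ge\Almins/\rhoflr-1$, drives the ratio down to $\rhoflr/(\rho+\rhoflr)$. When it does not (there is a $t_*$ after which $\OFF$ is never short of $\lmin$ packets), the hypothesis $p<q$ together with Lemma~\ref{lem:chernoff} guarantees $\OFF$ is also never short of $\lmax$ packets after some finite $t^*$, hence $\OFF$ is transmitting continuously after $t_0=\max(t_*,t^*)$. \emph{This} is what makes the denominator grow linearly in $t$; it is a statement about $\OFF$'s queue, not about the sign of $u-v$. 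Your route would work if you replaced the $v<u$/$v\ge u$ dichotomy with this queue-emptiness dichotomy, keeping your cleaner algebra for the first branch.
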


\newcommand{\lminarr}[1]{a^{\min}_{#1}}
\newcommand{\lmaxarr}[1]{a^{\max}_{#1}}
\newcommand{\Almins}{s^{\mathit{alg}}}
\newcommand{\Olmaxs}{r^{\mathit{off}}}
\newcommand{\Olmins}{s^{\mathit{off}}}

\begin{proof}
The claim has two cases. In the first case,  
$\lambda p\lmin \geq \frac{\rhoflr}{\rho}$. In this case, the upper bound of $\frac{\rhoflr}{\rho}$ is provided by Theorem~\ref{theo:abs-ub}.
In the second case $\lambda p\lmin < \frac{\rhoflr}{\rho}$. For this case, define two constants $\eta,\eta'$ such that $0 < \eta < \lambda < \eta'$
and $\eta' p < \eta q$. Observe that these constants always exist. Then, we prove that the relative throughput of any algorithm $\Alg$ in this case is at most 
$\max \left\{\eta' p\lmin, \frac{\rhoflr} {\rho+ \rhoflr} \right\}.$

Let us introduce some notation. We use $\lminarr{t}$ and $\lmaxarr{t}$ to denote the number of $\lmin$ and $\lmax$ packets, respectively, injected up to time $t$. 
%
Let $\Olmaxs_t$ and $\Olmins_t$ be the number of $\lmax$ and $\lmin$ packets respectively, successfully transmitted by $\OFF$ by time $t$. Similarly, let $\Almins_t$ be the number of $\lmin$ packets transmitted by algorithm $\Alg$ by time $t$. Observe that $\Almins_t \geq \Olmaxs_t \geq \lfloor \frac{\Almins_t}{\rhoflr} \rfloor$.

Let us consider a given execution and the time instants at which the queue of $\OFF$ is empty of $\lmin$ packets in the execution. We consider two cases.

\noindent
Case 1: For each time $t$, there is a time $t' > t$ at which $\OFF$ has the queue empty of $\lmin$ packets. Let us fix a value $\delta>0$ and define time instants $t_0, t_1, \ldots$ as follows. $t_0$ is the first time instant no smaller than $\lmin$ at which $\OFF$ has no $\lmin$ packet and such that $\lminarr{t_0} > \lmax$. Then, for $i>0$, $t_i$ is the first time instant no smaller than $t_{i-1} + \delta$ at which $\OFF$ has no $\lmin$ packets. The relative throughput at time $t_i$ can be bounded as 
$$
T_{\Alg}(A,E,t_i) \leq \frac{\Almins_{t_i} \lmin} {\Olmaxs_{t_i} \lmax + \lminarr{t_i} \lmin}
\leq \frac{\Almins_{t_i} \lmin} {\lfloor \frac{\Almins_{t_i}}{\rhoflr} \rfloor \lmax + \lminarr{t_i} \lmin}
\leq \frac{\Almins_{t_i} \lmin} {( \frac{\Almins_{t_i}}{\rhoflr} -1) \lmax + \lminarr{t_i} \lmin}.
$$
This bound grows with $\Almins_{t_i}$ when $\lminarr{t_i} > \lmax$, which leads to a bound on the relative throughput as follows.
$$
T_{\Alg}(A,E,t_i) \leq \frac{\lminarr{t_i} \lmin} { \lminarr{t_i} (\frac{\lmax}{\rhoflr} + \lmin) - \lmax} = \frac{\lminarr{t_i} \rhoflr} { \lminarr{t_i} (\rho+ \rhoflr) - \rho \rhoflr}.
$$
Which as $i$ goes to infinity yields a bound of $\frac{\rhoflr} {\rho+ \rhoflr}$.

\noindent
Case 2:
There is a time $t_*$ after which $\OFF$ never has the queue empty of $\lmin$ packets. 
Recall that for any $t \geq \lmin$,
from Lemma \ref{lem:chernoff}, we have that the number of $\lmin$ packets injected by time $t$ satisfy $\lminarr{t} > \eta' p t$ with probability at most $\exp(-c' t)$ and the injected $\max$ packets satisfy $\lmaxarr{t} < \eta q t$ with probability
at most $\exp(-c t)$. 
By the assumption of the theorem and the definition of $\eta$ and $\eta'$, $\eta' p < \eta q$. Let us define $t^*= 1/ (\eta q - \eta' p)$.
Then, for all $t \geq t^*$ it holds that $\lmaxarr{t} \geq \lminarr{t}+1$, with probability at least $1- \exp(-c' t) - \exp(-c t)$.
If this holds, it implies that $\OFF$ will always have $\lmax$ packets in the queue.

Let us fix a value $\delta>0$ and define $t_0=\max(t_*,t^*)$, and the sequence of instants $t_i = t_0 + i \delta$, for $i=0,1,2,\ldots$.
By the definition of $t_0$, at all times $t > t_0$ $\OFF$ is successfully transmitting packets.
Using Lemma \ref{lem:chernoff}, we can also claim that in the interval $(t_0,t_i]$ the probability that more than $\eta' p i \delta$ packets $\lmin$ are injected is 
no more than $\exp(-c'' i \delta)$. 

With the above, the relative throughput at any time $t_i$ for $i \geq 0$ can be bounded as
$$
T_{\Alg}(A,E,t_i) \leq \frac{ (\lminarr{t_0} + \eta' p \cdot i \delta) \lmin} {\Olmaxs_{t_0} \lmax + \Olmins_{t_0} \lmin + i \delta}
$$
with probability at least $1- \exp(-c t_i) - \exp(-c' t_i) - \exp(-c'' t_i)$. Observe that as $i$ goes to infinity the above bound converges to $\eta' p \lmin$, while the probability converges exponentially fast to 1.
\end{proof}

\subsection{Lower Bound and Algorithm \CRalg}

In this section we consider algorithm \CRalg (stands for Conditional \Ralg), which 
builds on algorithm \Ralg presented in Section~\ref{sec:ralg}, in order to solve packet scheduling
in the setting of stochastic packet arrivals. The algorithm, depending on the
arrival distribution, either follows the $\SL$ policy (giving
priority to $\lmin$ packets) or algorithm \Ralg. More precisely,
algorithm \CRalg acts as follows:
\begin{center}
If $\lambda p \lmin > \frac{\rhoflr}{2\rho}$ then algorithm $\SL$ is run, otherwise
algorithm \Ralg is executed.
\end{center}

Then we show the following:

\begin{theorem}
\label{th:Cralg}
The relative throughput of algorithm \sloppy{\CRalg is not smaller than  
$\frac{\rhoflr}{\rho + \rhoflr}$}
for $\lambda p \lmin \le \frac{\rhoflr}{2\rho}$,
and not smaller than
$\min\left\{\lambda p \lmin,\frac{\rhoflr}{\rho}\right\}$
otherwise.
\end{theorem}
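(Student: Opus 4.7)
The plan is to split the proof along the two branches of algorithm $\CRalg$ and handle each separately.

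\textbf{Case 1: $\lambda p\lmin\leq\rhoflr/(2\rho)$.} Here $\CRalg$ executes $\Ralg$, and I plan to reduce directly to Theorem~\ref{thm:Ralg}, which delivers $T_{\Ralg}\geq\rhoflr/(\rho+\rhoflr)$ pointwise over every adversarial pair $(A,E)$. For any adversarial error function $E$ in the stochastic model and any sample path $A$ of the arrivals, the pair $(A,E(A,\cdot))$ is a valid adversarial instance, so $\liminf_t T_{\Ralg}(A,E,t)\geq\rhoflr/(\rho+\rhoflr)$ almost surely. Since $T_{\Ralg}(A,E,t)\in[0,1]$, Fatou's lemma yields $\liminf_t \E_A[T_{\Ralg}(A,E,t)]\geq\rhoflr/(\rho+\rhoflr)$, and taking the infimum over $E$ closes this case.

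\textbf{Case 2: $\lambda p\lmin>\rhoflr/(2\rho)$.} Here $\CRalg$ executes $\SL$, and the target is $T_{\SL}\geq\min(\lambda p\lmin,\rhoflr/\rho)$. Using the trivial bound $L_{\OPT}(A,E,t)\leq t$, it suffices to prove $L_{\SL}(A,E,t)\geq\min(\lambda p\lmin,\rhoflr/\rho)\cdot t-o(t)$ in expectation. I will invoke Lemma~\ref{l:stoch-lmin} to guarantee that the number of $\lmin$- and $\lmax$-arrivals by time $t$ lies in the expected range with probability at least $1-e^{-ct}$, and then split based on whether $\SL$'s $\lmin$-queue stays bounded or grows linearly under the adversary's chosen error strategy.

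If the $\lmin$-queue is eventually bounded, then every $\lmin$-arrival is eventually delivered successfully, so the number of successful $\lmin$-transmissions by time $t$ is at least (number of $\lmin$-arrivals)$-o(t)\geq(\lambda p-\epsilon)t$ with high probability, giving $L_{\SL}(t)\geq(\lambda p\lmin-\epsilon)t$ and a relative throughput at least $\lambda p\lmin$, which dominates $\min(\lambda p\lmin,\rhoflr/\rho)$. If instead the $\lmin$-queue grows linearly, then after some (random) time $t_0$ algorithm $\SL$ is always transmitting an $\lmin$-packet, and in each interval $[t_i,t_{i+1}]$ of length $\ell_i$ between consecutive error events $\SL$ successfully delivers exactly $\lfloor\ell_i/\lmin\rfloor$ packets of length $\lmin$, while $\OPT$ transmits at most $M(\ell_i):=\max\{a\lmin+b\lmax:a,b\in\mathbb{Z}_{\geq 0},\,a\lmin+b\lmax\leq\ell_i\}$ content. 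The key per-interval inequality is $\lfloor\ell/\lmin\rfloor\lmin/M(\ell)\geq\rhoflr/\rho$ for every $\ell\geq\lmin$: for $\ell<\lmax$ only $\lmin$-packets fit so the ratio equals $1$; for $\ell\geq\lmax$ the worst case $\rhoflr/\rho$ is attained exactly on $\ell\in[\lmax,(\rhoflr+1)\lmin)$, where $\OPT$'s best packing is a single $\lmax$ and $\SL$'s is $\rhoflr$ $\lmin$-packets, and the ratio improves for larger $\ell$ via the elementary fact that $(\rhoflr+k)/(\rho+k)\geq\rhoflr/\rho$ for every integer $k\geq 0$. Summing over intervals yields $L_{\SL}(t)/L_{\OPT}(t)\geq\rhoflr/\rho-o(1)$, which dominates $\min(\lambda p\lmin,\rhoflr/\rho)$.

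The main obstacle is the per-interval packing bound $\lfloor\ell/\lmin\rfloor\lmin/M(\ell)\geq\rhoflr/\rho$: verifying it requires case analysis on $\ell$ modulo $\lmin$ and $\lmax$, with separate treatment when $\rho$ is an integer (where $M(\ell)=\lfloor\ell/\lmin\rfloor\lmin$ and the ratio is exactly $1=\rhoflr/\rho$). A secondary subtlety is making the dichotomy on queue growth uniform over adversarial strategies: I plan to work with $\limsup_t\E_A[Q(t)]/t$, arguing that if it exceeds any $\delta>0$ then the system spends an asymptotically positive fraction of time in the growing-queue regime so the per-interval bound governs, and otherwise the bounded-queue argument applies.
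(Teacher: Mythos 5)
Your Case~1 coincides with the paper's treatment: both reduce to Theorem~\ref{thm:Ralg}, which already holds pathwise for every admissible $(A,E)$, and passing to the expectation (your Fatou step) is routine. Case~2 is where you genuinely diverge, and your per-interval packing bound $\lfloor\ell/\lmin\rfloor\lmin\ge(\rhoflr/\rho)\,M(\ell)$ is correct and attractive: writing $M(\ell)=a\lmin+b\lmax$ with integers $a,b\ge 0$, one has $\lfloor\ell/\lmin\rfloor\ge a+\lfloor b\rho\rfloor\ge a+b\rhoflr$, and $(a+b\rhoflr)/(a+b\rho)\ge\rhoflr/\rho$ because $\rho\ge\rhoflr$. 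However, two things break. The lesser problem is the opening reduction ``$L_{\OPT}(t)\le t$, so it suffices to show $L_{\SL}(t)\ge\min\{\lambda p\lmin,\rhoflr/\rho\}\,t-o(t)$'': for the $\rhoflr/\rho$ branch this is simply false, since the adversary may keep the channel unusable for an arbitrary fraction of time, driving $L_{\SL}(t)/t$ toward $0$ while the ratio $L_{\SL}/L_{\OPT}$ stays near $\rhoflr/\rho$. You do not actually rely on this inequality in the growing-queue branch, but the stated reduction is wrong, and it is precisely this obstacle that motivates the paper's ``modified time line'' in which OPT-unproductive periods are excised before any time arithmetic is done.

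The serious gap is the dichotomy itself. The branches ``$\lmin$-queue eventually $o(t)$'' and ``$\lmin$-queue grows linearly, hence $\SL$ is permanently saturated with $\lmin$-packets after some $t_0$'' are not exhaustive, and $\limsup_t\E_A[Q(t)]/t>0$ does not imply the second branch. An adaptive adversary can alternate long error-free stretches (the queue drains to zero and $\SL$ begins wasting capacity on $\lmax$-packets) with long blackouts (the queue refills linearly); along such paths the queue returns to $0$ infinitely often yet is not $o(t)$, so neither branch applies as written and your per-interval bound has nothing to say about the emptied sub-intervals. The paper handles exactly this mixed regime by fixing a horizon $t_j$ on the modified time line, defining $i^*$ as the \emph{last} index in $[\sqrt{j},j]$ at which fewer than $\rhoflr$ $\lmin$-packets are pending, bounding $\lmin$-deliveries up to $t_{i^*}$ by ``essentially all $\lmin$-arrivals'' (via a union bound over the Chernoff events $S_i$, $i\in[\sqrt{j},j]$, from Lemma~\ref{l:stoch-lmin}), and bounding the delivery rate on $(t_{i^*},t_j]$ by $\rhoflr$ packets per $\lmax$-interval; the minimum of the two contributions yields $\min\{\eta p\lmin,\rhoflr/\rho\}$ up to $o(1)$. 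To make your route watertight you would need an analogous ``last emptying time'' decomposition together with the uniform-in-$i$ arrival estimates, rather than a global queue-growth criterion.
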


\begin{proof}
We consider three complementary cases.

{\bf Case $\lambda p \lmin \le \frac{\rhoflr}{2\rho}$.}
In this case algorithm \CRalg runs algorithm \Ralg,
achieving, per Theorem~\ref{thm:Ralg}, relative throughput of at least 
$\frac{\rhoflr}{\rho+\rhoflr}$ 
under {\em any} error pattern.

{\bf Case $\frac{\rhoflr}{2\rho}\le\lambda p \lmin\le 1$.}
Our goal is to prove that the relative throughput is not smaller than $\min\left\{\eta p \lmin,\frac{\rhoflr}{\rho}\right\}$,
for any $\eta$ satisfying $\lambda/2<\eta<\lambda$.
Considering such an $\eta$ 
we can make use of Lemma~\ref{l:stoch-lmin}
with respect to $\lambda,\eta,p$.
The relative throughput compares the behavior of algorithm \CRalg, which is simply $\SL$ in this case, with \OPT for each execution. Hence, for the purpose of the analysis we introduce the following modification in every execution:
we remove all periods in which \OPT is not transmitting any packet.
By ``removing'' we understand that we count time after removing the \OPT-unproductive periods and ``gluing'' the remaining periods so that they form one time line. In the remainder of the analysis
of this case we consider these modified executions with modified time lines 
and whenever we need to refer to the ``original'' time line we use the notion of {\em global time}.

For any positive integer $i$, we define time points
$t_i=i \cdot \lmax$.
Consider events $S_i$, for positive integers $i$, defined as follows:
the number of packets arrived by time $t_i$ (on the modified time line of the considered execution)
is at least $t_i\eta p$. By Lemma~\ref{l:stoch-lmin} and the fact that time $t$ on the modified time line
cannot occur before the global time $t$, 
there is a constant $c$ dependent only on $\lambda,\eta,p$ such that for any $i$:
the event $S_i$ holds with probability at least $1-\exp{(-ct_i)}$.

Consider an integer $j>1$ being a square of another integer.
We prove that
by time $t_j$, the relative throughput is at least 
\[
\min\left\{\eta p \lmin - \frac{\rhoflr\lmin}{t_j},
(1-1/\sqrt{j})\cdot \frac{\rhoflr}{\rho}\right\}
\]
with probability at least
$1-c' \exp{(-ct_{\sqrt{j}})}$, for some constant $c'>1$ dependent
only on $\lambda,\eta,p$.
To show this, consider two complementary scenarios that may happen at time $t_j$:
there are at least $\rhoflr$ pending packets of length $\lmin$, or otherwise.
It is sufficient to show the sought property separately in each of these two 
scenarios.

Consider the first scenario, when there are at least 
$\rhoflr$
pending packets of length $\lmin$
at time $t_j$.
\sloppy{With probability at least $1-c' \exp{(-ct_{\sqrt{j}})}$, for every $\sqrt{j}\le i\le j$ 
at least $t_{i}\eta p$ packets arrive by time $t_i$.}
This is because of the union bound of the corresponding events $S_i$
and the fact that 
$\sum_{i\ge\sqrt{j}} \exp{(-ct_i)} \le c' \cdot \exp{(-ct_{\sqrt{j}})}$
for some constant $c'>1$ dependent on $\lambda,\eta,p$ 
(note here that although $c'$ seems to depend also on $c$, 
$c'$ is still dependent only on $\lambda,\eta,p$ because $c$
is a function of these three parameters as well).
Consider executions in  $\bigcup_{i=\sqrt{j}}^j S_i$.
Using induction on $i$, if follows that for these executions
for every $\sqrt{j}\le i\le j$ the following invariant holds:
at least 
$t_i \eta p - \rhoflr$ 
packets of length $\lmin$ have been successfully transmitted by time $t_i$
or in the time interval $[t_i,t_{i+1}]$ at least $\rhoflr$ packets of length $\lmin$ are 
successfully transmitted (i.e., these successful transmissions end in the interval $[t_i,t_{i+1}]$).
The inductive proof of this invariant follows directly from the specification of algorithm \CRalg
(recall that it simply runs algorithm $\SL$ in the currently considered case) 
and from the definition of the modified execution and time line.
Let $i^*$ denote the largest $i\in [\sqrt{j},j]$ satisfying the following condition:
there are less than $\rhoflr$ packets of length $\lmin$ pending in time $t_i$;
if such an $i$ does not exist, we set $i^*=-1$.
Consider two sub-cases.

\noindent
{\em Sub-case $i^*\ge \sqrt{j}$ .}
If follows from the invariant and the definition of $i^*$ that by time $t_{i^*}$ there are at least 
$t_i \eta p - \rhoflr$ successfully transmitted packets of length $\lmin$,
and in each interval $[t_i,t_{i+1}]$, for $i^*\le i<j$, at least $\rhoflr$ packets of length $\lmin$
finish their successful transmission.
Therefore, by time $t_j$ the total length of packets (of length $\lmin$) successfully transmitted by 
algorithm \CRalg is at least 
\[
(t_{i^*} \eta p - \rhoflr)\lmin + \frac{t_j- t_{i^*}}{\lmax}\cdot \rhoflr\lmin
\ ,
\]
while the total length of successfully transmitted packets by \OPT by time $t_j$ is at most
$t_j$, by the definition of the modified execution and time line.
Therefore the relative throughput is at least
\[
\frac{(t_{i^*} \eta p - \rhoflr)\lmin + \frac{t_j- t_{i^*}}{\lmax}\cdot \rhoflr\lmin}{t_j}
\]
\[
\ge 
\min\left\{\frac{(t_{j} \eta p - \rhoflr)\lmin}{t_j},
\frac{\frac{t_j- t_{\sqrt{j}}}{\lmax}\cdot \rhoflr\lmin}{t_j}\right\}
\]
\[
=
\min\left\{\eta p \lmin - \frac{\rhoflr\lmin}{t_j},
(1-1/\sqrt{j})\cdot \frac{\rhoflr}{\rho}\right\}
\ .
\]
This converges to 
$\min\left\{\eta p \lmin,\frac{\rhoflr}{\rho}\right\}$ with $j$ going to infinity.

\noindent
{\em Sub-case $i^*<\sqrt{j}$ .}
In this sub-case we have, by definition of \mbox{$i^*<\sqrt{j}$}, that
at every time $t_i$, where $\sqrt{j}\le i\le j$, there are at least $\rhoflr$ pending packets of length $\lmin$.
Consequently, by the specification of the algorithm,
in each interval $[t_i,t_{i+1}]$, for $\sqrt{j}\le i<j$, at least $\rhoflr$ packets of length $\lmin$
finish their successful transmission.
Therefore, by time $t_j$ the total length of packets (of length $\lmin$) successfully transmitted by 
algorithm \CRalg is at least 
\[
\frac{t_j- t_{\sqrt{j}}}{\lmax}\cdot \rhoflr\lmin
\ ,
\]
while the total length of successfully transmitted packets by \OPT by time $t_j$ is at most
$t_j$, by the definition of the modified execution and time line.
Therefore the relative throughput is at least
\[
\frac{\frac{t_j- t_{\sqrt{j}}}{\lmax}\cdot \rhoflr\lmin}{t_j}
=
(1-1/\sqrt{j})\cdot \frac{\rhoflr}{\rho}
\ ,
\]
and it converges to 
$\frac{\rhoflr}{\rho}$ with $j$ going to infinity.
This completes the analysis of the sub-cases.

Finally, it is important to notice that the final converge of the ratio, with $j$ going to infinity,
in both sub-cases gives a valid bound on the relative throughput, since 
the subsequent ratios hold with probabilities approaching $1$ exponentially fast (in $j$),
i.e., with probabilities at least  $1-c' \exp{(-ct_{\sqrt{j}})}$, where $c$ and $c'$ are positive
constants dependent only on $\lambda,\eta,p$.
The minimum of the two relative throughputs, coming from the sub-cases,
is $\min\left\{\eta p \lmin,\frac{\rhoflr}{\rho}\right\}$, as desired
and therefore the relative throughput is at least
$\min\left\{\lambda p \lmin,\frac{\rhoflr}{\rho}\right\}$
in this case.

{\bf Case $\lambda p\lmin> 1$.}
In this case we simply observe that we get at least the same relative throughput as
in case $\lambda p \lmin=1$, because we are dealing with executions saturated 
with packets of length $\lmin$ with probability converging to $1$ exponentially fast.
(Recall that we use the same algorithm $\SL$ in the specification of \CRalg, both for 
$\lambda p \lmin=1$ and for $\lambda p \lmin>1$.)
\sloppy{Consequently, the relative throughput in this case is at least
$\min\left\{\eta p \lmin,\frac{\rhoflr}{\rho}\right\}$, for any $\lambda/2<\eta<\lambda$,
and therefore it is at least
$\min\left\{\lambda p \lmin,\frac{\rhoflr}{\rho}\right\}
\ge
\min\left\{1,\frac{\rhoflr}{\rho}\right\}
=
\frac{\rhoflr}{\rho}$.}
\end{proof}

\remove{
A sketch of the proof is as follows. We provide further details in the Appendix.
We break the analysis into cases according to the probability of $\lmin$ packet arrivals and consider the time line of executions ignoring any OPT-unproductive periods.\vspace{.2em} 

\noindent
{\bf Case $\lambda p \lmin \le \frac{\rhoflr}{2\rho}$.}
In this case algorithm \CRalg runs algorithm \Ralg,
achieving, per Theorem~\ref{thm:Ralg}, relative throughput of at least 
$\frac{\rhoflr}{\rho + \rhoflr}$
under {\em any} error pattern.\vspace{.2em} 

\noindent
{\bf Case $\frac{\rhoflr}{2\rho}\le\lambda p \lmin\le 1$.}
It can be proved that the relative throughput is not smaller than $\min\left\{\eta p \lmin,\frac{\rhoflr}{\rho}\right\}$,
for any $\eta$ satisfying $\lambda/2<\eta<\lambda$. To prove it, 
we consider time points $t_i$ being multiples of $\lmax$ and show that with high probability, at those points there have already arrived at least $t_i\eta p$ packets. Using this property, we show that the relative
throughput at time $t_j$ is at least
$\min\left\{\eta p \lmin - \frac{\rhoflr\lmin}{t_j},
(1-1/\sqrt{j})\cdot \frac{\rhoflr}{\rho}\right\}$
with probability at least $1-c' \exp{(-ct_{\sqrt{j}})}$, for some constant $c,c'>0$ dependent
only on $\lambda,\eta,p$.
It follows that if $j$ grows to infinity, we obtain the desired relative throughput.\vspace{.2em}

\noindent
{\bf Case $\lambda p\lmin> 1$.}
In this case we simply observe that we get at least the same relative throughput as
in case $\lambda p \lmin=1$, because we are dealing with executions saturated 
with packets of length $\lmin$ with probability converging to $1$ exponentially fast.
(Recall that we use the same algorithm $SL$ in the specification of \CRalg, both for 
$\lambda p \lmin=1$ and for $\lambda p \lmin>1$.)
\sloppy{Consequently, the relative throughput in this case is at least
$\min\left\{\eta p \lmin,\frac{\rhoflr}{\rho}\right\}$, for any $\lambda/2<\!\eta<\!\lambda$,
and thus 
it is at least
$\min\left\{\lambda p \lmin,\frac{\rhoflr}{\rho}\right\}
\ge
\min\left\{1,\frac{\rhoflr}{\rho}\right\}
=
\frac{\rhoflr}{\rho}$.}\vspace{.3em}

\noindent Combining the three cases, we get the claimed result (see Appendix~\ref{a:stoc-inst} for details).
}

Observe that if we compare the upper bounds on relative throughput shown in the previous subsection
with the lower bounds of the above theorem, then we may conclude that in the case where $\rho$
is an integer, algorithm $\CRalg$ is optimal (wrt relative throughput). In the case where $\rho$
is not an integer, there is a small gap between the upper and lower bound results.

\section{Conclusions}
\label{sec:conclusions}

This work was motivated by the following observation regarding the system of dynamic packet arrivals with errors: scheduling packets of same length is relatively easy and efficient in case of instantaneous feedback, 
but extremely inefficient in case of deferred feedback.
We studied scenarios with two different packet lengths, developed efficient algorithms, and proved upper and lower bounds for relative throughput
in average-case (i.e., stochastic) and worst-case (i.e., adversarial) online packet arrivals.
These results demonstrate that exploring instantaneous feedback mechanisms (and developing
more effective implementations of it) has the potential 
to significantly increase the performance of communication systems.

Several future research directions emanate from this work. Some of them concern the exploration of 
variants of the model considered, for example, assuming that packets that suffer errors are not retransmitted (which applies when Forward Error Correction~\cite{CED} is used), considering
packets of more than two lengths, or assuming bounded buffers. Other lines of work deal with adding QoS requirements to the problem, such as requiring fairness in the transmission of the packets from different flows or imposing deadlines to the packets.
In the considered adversarial setting, it is easy to see that even an omniscient offline solution cannot achieve stability: for example, the adversary could prevent any packet from being transmitted correctly. Therefore, an interesting extension of our work would be to study conditions (e.g., restrictions on the adversary) under 
which an online algorithm could maintain stability, and still be efficient with respect to relative throughput.
Finally, we believe that the definition of relative throughput as proposed here can be adapted, possibly in a different context, to other metrics and problems.

\bibliographystyle{plain}
\bibliography{references}

\newpage
\appendix
\section*{APPENDIX}

\section{NP-hardness}
\label{sec:np-hard}

We prove the NP-hardness of the following problem, defined for a single link.

INSTANCE (Throughput Problem):  Set $X$ of packets, for each packet $x \in X$ a length $l(x) \in \mathbb{N}^+$, an arrival time $a(x) \in \mathbb{Z}^0$, a sequence of time instants $0=T_0 < T_1 <  T_2 < \cdots < T_k$, 
$T_i \in \mathbb{N}^0$, so that the
link suffers an instantaneous error at each time $T_i$, $i \in [1,k]$
(in other words, at each time $T_i$, any packet transmitted over the link is corrupted).

QUESTION: is there a schedule of $X$ so that error-free packets of total length $T_k$ are transmitted by time $T_k$ over the link?

\begin{theorem}
The Throughput Problem is NP-hard.
\end{theorem}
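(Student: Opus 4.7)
The plan is to reduce from \textbf{SUBSET SUM}, which is a classical NP-complete problem. Given an instance consisting of positive integers $s_1,\ldots,s_n$ and a target $T\in\mathbb{N}^+$, I construct a Throughput Problem instance with packet set $X=\{x_1,\ldots,x_n\}$, lengths $l(x_i)=s_i$, arrival times $a(x_i)=0$ for all $i$, and a single error time $T_1=T$ (so $k=1$ and $T_k=T$). This construction is evidently polynomial in the size of the SUBSET SUM input.

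For the forward direction, suppose a subset $S'\subseteq\{1,\ldots,n\}$ satisfies $\sum_{i\in S'}s_i=T$. Since every packet is available at time $0$, schedule the packets $\{x_i : i\in S'\}$ back-to-back in any order starting at time $0$. The last transmission finishes exactly at time $T_1=T$, and no error time lies strictly inside any packet's transmission interval, so all transmitted packets are error-free and contribute total length $T=T_k$.

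For the reverse direction, suppose some schedule achieves error-free transmissions of total length $T$ by time $T_k=T$. Error-free transmissions fit inside intervals bounded by error times, so here they must fit inside $[0,T]$; in particular the total length transmitted is at most $T$, the length of the available time window. Equality forces the transmitted packets to tile $[0,T]$ exactly, leaving no idle gap and containing no packet whose transmission is ongoing at time $T_1=T$ (which would be corrupted). Therefore the lengths of transmitted packets form a subset of $\{s_1,\ldots,s_n\}$ summing to exactly $T$, yielding a witness for the SUBSET SUM instance.

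The reduction is direct and the only subtle point is the exact-tiling argument in the reverse direction; there is no significant obstacle. I note in passing that the same proof template, but reducing from 3-PARTITION (with $m$ error times at $T_i=iB$ and all packets arriving at time $0$), gives strong NP-hardness, which is stronger than the stated theorem but not required for it.
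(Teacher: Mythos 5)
Your proof is correct, and it takes a genuinely different route from the paper's. The paper reduces from 3-PARTITION, setting $X=A$, $l()=s()$, $a()=0$, $k=m$, and $T_i=iB$, so that each of the $m$ inter-error windows has width exactly $B$ and the tiling argument forces a 3-partition of $A$. You instead reduce from SUBSET SUM with a single error time $T_1=T$ and one window $[0,T]$. Both reductions are valid polynomial-time reductions from NP-hard problems, so both establish the theorem as stated. The meaningful difference is what each buys beyond the statement: 3-PARTITION is strongly NP-complete, so the paper's reduction shows the Throughput Problem remains NP-hard even when all lengths and time instants are polynomially bounded in the number of packets (i.e., in unary); SUBSET SUM is only weakly NP-complete, so your reduction leaves open whether the problem becomes easy for polynomially bounded numbers (it does not, as the paper's reduction shows, but your argument alone does not rule out a pseudo-polynomial algorithm). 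You correctly flag this yourself, observing that switching to 3-PARTITION with $T_i=iB$ would recover strong NP-hardness. One point worth making explicit in the reverse direction (present implicitly in both your argument and the paper's): the schedule may not retransmit a packet that was already successfully delivered, so the set of error-free transmissions really does correspond to a subset of $\{s_1,\ldots,s_n\}$ rather than a multiset; this matches the intended semantics of the Throughput Problem and is consistent with the paper's reading.
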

\begin{proof}
We use the 3-Partition problem which is known to be an NP-hard problem. 

INSTANCE:  Set $A$ of $3m$ elements, a bound $B\in\mathbb{N}^+$ and, for each $a\in A$, a size $s(a)\in\mathbb{N}^+$ such that $B/4<s(a)<B/2$ and $\sum_{a\in A} s(a) = mB$. 

QUESTION: can $A$ be partitioned into $m$ disjoint sets $\{A_1,A_2,\dots, A_m\}$ such that, for each $1\leq i\leq m$, $\sum_{a\in A_i} s(a) = B$?

We reduce the 3-Partition problem to the Throughput Problem, defined for a single link. 
The reduction is by setting $X=A$, $l()=s()$, $a()=0$, $k=m$, and $T_i=i B$ for $i \in [1,k]$.
If the answer to 3-Partition is affirmative, then for the Throughput Problem there is a way to schedule (and transmit) the packets in $X$ in subsets $\{X_1,X_2,\dots,X_m\}=\{A_1,A_2,\dots,A_m\}$, so that all the packets in $A_i$ can be transmitted over the link in the interval $[T_{i-1},T_i]$. Furthermore, since $\sum_{a\in A_i} s(a) = \sum_{x\in X_i} l(x) = B$, and $T_i - T_{i-1}=B$,
the total length of packets transmitted by time $T_k$ is $T_k$.

The reverse argument is similar. If there is a way to schedule packets so that the total packet length transmitted by time $T_k$ is $T_k$, in each interval between two error events on the link  
there must be exactly $B$ bytes of packets transmitted. Then, the packets can be partitioned into subsets of total length $B$ each. 
This implies the partition of~$A$.
\end{proof}

\section{Deferred Feedback}
\label{sec:deferred}

In this section we study the relative throughput of any algorithm under the deferred feedback mechanism. As described in Section~\ref{s:intro}, with this mechanism the sending station is notified about a packet having been corrupted by an error only after the transmission of the packet is completed. 
Here we assume that all packets have the same length $\ell$. We show that even in this case no algorithm can achieve positive throughput.

\subsection{Adversarial Arrivals}
\label{sec:deferred1}
In order to prove the upper bound on throughput, the packets arrive frequently enough so that there are always packets ready.
The algorithm will then greedily send a train of packets.
The adversary injects bit errors at a distance of exactly $\ell$ so that each error hits a different packet, 
and hence the algorithm cannot successfully complete any transmission
(that is, it cannot transmit non-corrupted packets). 
At the same time, an offline algorithm \OFF is able to send packets in each interval of length $\ell$ without errors. This argument leads to the following theorem:

\begin{theorem} \vspace{-.5em}
No packet scheduling algorithm \Alg can achieve a relative throughput larger than $0$
under adversarial arrivals in the deferred feedback model, even with one packet length.
\end{theorem}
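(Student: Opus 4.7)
The plan is to exhibit one specific adversarial arrival pattern $A$ and error pattern $E$ under which $\Alg$ successfully transmits nothing at all, while the offline optimum transmits at positive rate, driving the ratio $T_{\Alg}(A,E,t)$ to $0$ as $t\to\infty$. Since $T_{\Alg}$ is the infimum over all such pairs (and the relative throughput is trivially non-negative), this suffices to conclude $T_{\Alg}=0$.

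For the arrival pattern, I would inject packets densely from time $0$ onwards (say, one packet every $\ell$ time units, starting at $0$) so that $\Alg$'s queue is never empty. Since $\Alg$ is work-conserving and all packets have the same length $\ell$, once it begins at time $0$ it is locked into transmitting packets back-to-back in the intervals $[i\ell,(i+1)\ell]$ for $i=0,1,2,\dots$. Because $\Alg$ is deterministic and the adversary knows both $\Alg$ and the history, the adversary can precompute these transmission windows without needing to adapt. I would then place one error inside the interior of each window, at the times $\ell/2,\,3\ell/2,\,5\ell/2,\dots$. Crucially, because feedback is \emph{deferred}, $\Alg$ learns about a corruption only at time $(i+1)\ell$, after the whole packet has been emitted; it cannot abort and must waste the full $\ell$ transmission time on every attempt. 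Thus $L_{\Alg}(A,E,t)=0$ for all $t$.

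Next I would describe a fixed offline algorithm $\OFF$ whose behaviour shows $L_{\OPT}(A,E,t)$ grows linearly in $t$. $\OFF$ transmits during the intervals $[\ell/2,3\ell/2],\,[3\ell/2,5\ell/2],\dots$, i.e.\ one packet of length $\ell$ between each pair of consecutive errors, using the packets that have arrived at times $0,\ell,2\ell,\dots$. Each such interval contains no error in its interior (the errors sit precisely at its two endpoints), so with the natural convention that a transmission whose open interior avoids every error event is successful, $\OFF$ completes a packet every $\ell$ time units, giving $L_{\OFF}(A,E,t)\ge t-\Theta(\ell)$. Then
\[
T_{\Alg}(A,E,t)=\frac{L_{\Alg}(A,E,t)}{L_{\OPT}(A,E,t)}\le \frac{0}{L_{\OFF}(A,E,t)}\xrightarrow[t\to\infty]{} 0,
\]
and taking the infimum over $(A,E)$ in the adversarial model yields $T_{\Alg}\le 0$, which combined with $T_{\Alg}\ge 0$ finishes the argument.

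The only mildly delicate point is the boundary issue: errors spaced exactly $\ell$ apart leave no strict interior room of width $\ell$ for $\OFF$, so one must either fix the convention that errors at packet endpoints do not corrupt, or perturb the construction slightly (for instance, letting the adversary commit to a schedule in which successive errors are placed at midpoints of $\Alg$'s windows so that each error sits strictly inside an $\Alg$ transmission, and allowing $\OFF$ the symmetric offset). I expect this to be the main (if minor) technical wrinkle; the conceptual content, namely that deferred feedback lets the adversary neutralise every transmission at amortised cost $\ell$ while $\OFF$'s clairvoyance lets it slot around the errors, carries essentially no further difficulty.
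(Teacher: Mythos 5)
Your construction is essentially the paper's own argument: saturate the sender so the work-conserving algorithm emits packets back-to-back in windows of length $\ell$, place one error inside each window (errors spaced exactly $\ell$ apart) so that deferred feedback forces every transmission to be wasted, and let $\OFF$ transmit in the offset intervals between consecutive errors, giving ratio $0/\Theta(t)\to 0$. The boundary convention you flag (errors sitting exactly at the endpoints of $\OFF$'s packets) is glossed over in the paper as well, which simply notes that there is ``at least $\ell$ space'' between errors; your proposal is correct and matches the intended proof.
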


\subsection{Stochastic Arrivals}
\label{sec:deferred2}

Let us consider now stochastic arrivals. We show that also in this case the upper bound on the relative throughput is $0$.

\begin{theorem}
No packet scheduling algorithm \Alg can achieve a relative throughput larger than $0$
under stochastic arrivals in the deferred feedback model, even with one packet length.
\end{theorem}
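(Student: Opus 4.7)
The plan is to construct, for any work-conserving deterministic algorithm $\Alg$, an adversarial error function $E$ (depending on the arrival pattern $A$) that forces $L_{\Alg}(A,E,t)=0$ for every $t$ while a carefully chosen offline schedule $\OFF$ still delivers packets at a linear rate. Because $\Alg$ is deterministic and work-conserving and every transmission will be made to fail by the adversary, from the first arrival time $T_0$ onward $\Alg$ is continuously busy retransmitting, so its consecutive transmission start times satisfy $\tau_{i+1}=\tau_i+\ell$, with $\tau_1=T_0$.

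The adversary inserts a single error inside each $\Alg$-transmission, alternating the position: at $\tau_i+\ell-\delta$ for odd $i$ and at $\tau_i+\delta$ for even $i$, where $\delta\in(0,\ell/2)$ is fixed. Each such error lies in $(\tau_i,\tau_i+\ell)$, so under deferred feedback every $\Alg$-transmission is corrupted and $L_{\Alg}(A,E,t)=0$. On the time axis the errors form tight pairs at $T_0+(2k-1)\ell\pm\delta$, for $k\ge 1$, each pair spanning only $2\delta$ and leaving a clean gap of length $2\ell-2\delta>\ell$ between consecutive pairs.

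I would then let $\OFF$ transmit one length-$\ell$ packet inside each such gap, e.g.\ inside $[T_0+(2k-1)\ell+\delta+\varepsilon,\;T_0+2k\ell+\delta+\varepsilon]$ for a small $\varepsilon>0$, whenever a packet is pending. Since arrivals form a Poisson process of rate $\lambda>0$, a Chernoff-type argument analogous to Lemma~\ref{l:stoch-lmin} shows that by time $t$ the number of arrived packets is at least $\lambda t/2$ with probability $1-e^{-\Omega(t)}$, so all $\lfloor t/(2\ell)\rfloor$ gaps up to time $t$ can eventually be filled (or, when $\lambda<1/(2\ell)$, they are filled at rate $\lambda$). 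Consequently $L_{\OFF}(A,E,t)=\Omega(t)$ almost surely, and since $L_{\OPT}\ge L_{\OFF}$ and $L_{\Alg}(A,E,t)=0$, we have $T_{\Alg}(A,E,t)=0$ for every sufficiently large $t$; hence $\lim_{t\to\infty}\E_{A}[T_{\Alg}(A,E,t)]=0$ and $T_{\Alg}=0$.

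The main obstacle is a natural symmetry: if the adversary simply inserted each error at a fixed offset within $\Alg$'s transmission, the resulting error stream would be strictly periodic with period $\ell$, and then no length-$\ell$ schedule could avoid the errors, making $L_{\OFF}=0$ as well and yielding the trivial relative throughput $0/0=1$. The alternating construction breaks the symmetry by coalescing consecutive errors into pairs; this is feasible precisely because $\Alg$'s work-conservation locks its transmissions onto a period-$\ell$ grid, while the adversary retains the freedom to shift each individual error within the length-$\ell$ window. A minor extra case analysis handles very small $\lambda$ and the initial transient before the first arrival, whose bounded contribution washes out in the limit.
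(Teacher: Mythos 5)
Your proof is correct, and at the strategic level it coincides with the paper's: in both cases the adversary places one error inside every transmission of $\Alg$ (so $\Alg$ completes nothing) while leaving enough error-free room for an offline schedule to transmit at a linear rate, and the Poisson arrivals guarantee the offline schedule has material to send. Where you diverge is in the placement rule. The paper simply inserts the error at the midpoint of each $\Alg$-packet, so when $\Alg$ transmits back-to-back the errors form a lattice of period exactly $\ell$, and the paper asserts (without fuss) that $\OFF$ can fit an $\ell$-packet into each gap. You correctly observe that this is a boundary case: a period-$\ell$ error lattice leaves gaps of length exactly $\ell$, and whether an $\ell$-packet squeezed between two consecutive errors is hit by the endpoint errors depends on a convention the model does not pin down. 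Your alternating-offset construction ($\tau_i+\ell-\delta$ for odd $i$, $\tau_i+\delta$ for even $i$) sidesteps the issue entirely by coalescing errors into pairs of width $2\delta$ and opening gaps of length $2\ell-2\delta>\ell$, so $\OFF$'s packet fits strictly inside a gap with room to spare. The price you pay is a slightly more elaborate bookkeeping for where $\OFF$ transmits, plus the small observation that $\Alg$'s transmissions sit on the period-$\ell$ grid from the first arrival onward (work conservation plus the fact that no transmission ever succeeds). The Chernoff/limit argument at the end, handling the exponentially small probability mass where few packets arrive, is exactly what is needed to turn $T_{\Alg}(A,E,t)\to 0$ for a.e.\ $A$ into $\lim_t \E_A[T_{\Alg}(A,E,t)]=0$; the paper omits this step but it is implicit. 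In short: same high-level strategy, but your construction is technically cleaner and closes a genuine (if minor) gap in the paper's argument regarding the period-$\ell$ boundary case.
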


\begin{proof}
As described in Section~\ref{sec:model}, we assume that packets arrive at a rate $\lambda$. Here we assume that all packets have the same length $\ell$.
Observe that if $\lambda \ell < 1$ there are many times when there is no packet ready to be sent and the link will be idle. In any case, the adversary can inject errors following the next rule: inject an error in the middle point of each packet sent by \Alg. Applying this rule, no packet sent by \Alg is received without errors. However,
between two errors there is at least $\ell$ space (even if packets are contiguous) and the offline algorithm \OFF can send a packet. The conclusion is that \OFF is able to successfully send at least one packet between two attempts of \Alg, while \Alg cannot complete successfully any transmission. This completes the proof.
\end{proof}

\section{Upper Bounds for Algorithms $\SL$ and $\LL$}
\label{subsec:SL-LL}

We prove upper bounds that suggest that algorithms $\SL$ (Shortest Length) and $\LL$ (Longest Length)
are not efficient. 
\sloppy{First, we show that $\SL$ cannot have relative throughput larger than
$\frac{1}{\rho + 1}$
under adversarial arrivals. We then show that algorithm $\LL$ is even worse, as its relative throughput cannot be more than $0$ even with stochastic arrivals.}

\begin{theorem}
Algorithm $\SL$ cannot achieve relative throughput larger than 
$\frac{1}{\rho + 1}$ under adversarial arrivals, even if there is a schedule that transmits all the packets.
\end{theorem}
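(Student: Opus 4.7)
I would exhibit an adversarial injection pattern $A$ together with an (adaptive) error pattern $E$ for which $\lim_{t \to \infty} T_{\SL}(A,E,t) = \frac{1}{\rho + 1}$, and simultaneously display an offline schedule that successfully transmits every injected packet (so the theorem's hypothesis is met). The construction is periodic with period $L = \lmin + 2\lmax$: at each time $iL$ the adversary injects one $\lmin$-packet and one $\lmax$-packet, and whenever $\SL$ is about to complete an $\lmax$-transmission the adversary places a link error just before its completion (for a fixed small $\epsilon > 0$). Since $\SL$ is deterministic and the adversary is adaptive with full knowledge of its state, these error times are well-defined functions of $A$.

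\smallskip

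\noindent\textit{Behavior of $\SL$.} Within each period, because $\SL$ prioritizes $\lmin$-packets and is work-conserving, it first delivers the newly injected $\lmin$ in $[iL,\,iL+\lmin]$ successfully (no error is placed inside a $\lmin$-transmission). Then, with only $\lmax$-packets pending in its queue, it begins an $\lmax$-transmission and is terminated by the adversarial error just before completion; it retries another $\lmax$ and is terminated again. Hence per period $\SL$ delivers only the one $\lmin$-packet, contributing $\lmin$ to its cumulative throughput.

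\smallskip

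\noindent\textit{Exhibiting an offline schedule.} I would then verify that the following schedule succeeds on every packet: in each period, send the $\lmax$ in $[iL,\,iL+\lmax]$ and the $\lmin$ in $[iL+\lmin+\lmax,\,iL+2\lmin+\lmax]$. The first interval precedes the period's first adversarial error (which occurs near $iL+\lmin+\lmax-\epsilon$, later than $iL+\lmax$). The second interval lies strictly between the period's first and second adversarial errors, since the second error occurs near $iL+\lmin+2\lmax-\epsilon$, which exceeds $iL+2\lmin+\lmax$ for small $\epsilon$ (using $\lmax>\lmin$). Both transmissions therefore succeed, the offline delivers every injected packet, and the hypothesis of the theorem holds. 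Summing over $N$ periods yields $\SL$-throughput $N\lmin$ against offline-throughput $N(\lmin+\lmax)$, giving $\lim_{t \to \infty} T_{\SL}(A,E,t) = \frac{\lmin}{\lmin + \lmax} = \frac{1}{\rho+1}$ and establishing $T_{\SL} \le \frac{1}{\rho+1}$.

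\smallskip

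\noindent\textit{Main obstacle.} The delicate step is handling the cross-period ``spillover'' of $\SL$'s failed $\lmax$-retries: because $\SL$ is non-preemptive, a retry begun shortly before time $(i+1)L$ runs into period $i+1$ and is again destroyed by the adaptive adversary before the new $\lmin$-packet can be served. One must check that this transient behavior does not change the steady-state rate of exactly one delivered $\lmin$ per period. This follows by amortization: across any window of $N$ consecutive periods, exactly $N$ $\lmin$-packets are injected and each one is eventually delivered (since $\SL$ prioritizes $\lmin$ at the next opportunity), while every $\lmax$-attempt is destroyed, so the rate is one delivered $\lmin$ per period in the limit.
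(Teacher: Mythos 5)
Your high-level strategy (exhibit an adversarial pattern against $\SL$ together with a gap-free offline schedule) matches the paper's. But there is a concrete gap in your construction: your offline schedule is a \emph{fixed} schedule relative to the period boundaries $iL$, while the error pattern you define is \emph{reactive} (an error is placed $\epsilon$ before each of $\SL$'s $\lmax$-completions). Because $\SL$ is work-conserving and non-preemptive, after the two killed $\lmax$-attempts in period $0$ there remain roughly $2\epsilon$ of unused time before $(i{+}1)L$, so $\SL$ begins a third $\lmax$-attempt that spills into the next period and triggers an error there at roughly $\lmin+3\lmax-3\epsilon$. In period $1$ the first error therefore lands inside the interval $[L,\,L+\lmax]$ where you claimed $\OFF$ sends its $\lmax$, and the offline transmission fails. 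In fact the error positions drift from period to period (by a quantity of order $\epsilon$), so no fixed-in-period schedule can be verified by looking at period $0$ alone. Your ``main obstacle'' paragraph acknowledges the spillover but only addresses $\SL$'s delivery rate; it does not address the feasibility of the offline schedule, which is where the construction breaks.

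The paper sidesteps this entirely by a tighter, self-synchronizing construction: the period length is $\lmin+\lmax$, and the adversary places each error \emph{at the moment $\OFF$ completes a packet} (first the $\lmax$, then the $\lmin$), not $\epsilon$ before $\SL$'s completions. By construction $\OFF$ is never interrupted, its queue empties each period (so the ``schedule transmits all packets'' hypothesis is satisfied), and $\SL$ delivers exactly one $\lmin$ and is killed twice per period with no residual slack, so there is no drift to control. To repair your proof, either switch to triggering errors on $\OFF$'s completions as in the paper, or give a genuinely invariant description of the drifting error times together with an offline schedule that adapts to them and show both that it fits between errors and that it clears the backlog.
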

\begin{proof}
The scenario works as follows. At time 0 two packets arrive, one of length $\lmax$ and one of length $\lmin$. 
$\SL$ schedules first the packet of length $\lmin$, and when it is transmitted, it schedules the packet of length $\lmax$. 
Meanwhile, an offline algorithm \OFF schedules first the packet of length $\lmax$. When it is transmitted, the adversary causes an error on the 
link, so $\SL$ does not transmit successfully the packet of length $\lmax$.
Now, $\SL$ only has one packet of length $\lmax$ in its queue (when this scenario is repeated will have several, but no packets of length $\lmin$). Hence, $\SL$ schedules this packet, while \OFF schedules the packet of length $\lmin$ that has in its queue. When \OFF completes the transmission of the $\lmin$ packet, 
the adversary causes an error on the link. This scenario can be repeated forever. In each instance, \OFF transmits one packet of length $\lmax$ 
and one of lenght $\lmin$, while $\SL$ only transmits one packet of length $\lmin$. Hence, the throughput achieved is $\frac{\lmin}{\lmax + \lmin} = \frac{1}{\rho + 1}$. 
Observe that at the end of each instance of the scenario the queue of \OFF is empty.
\end{proof}

We now show that the above upper bound also holds with stochastic arrivals under specific packet arrival rates.
%

\begin{theorem}
$\forall \ep > 0, \exists \lambda, p, q$ such that algorithm $\SL$ cannot achieve a relative throughput larger than $\frac{1}{(1-\ep) \rho + 1} + \ep$.
\end{theorem}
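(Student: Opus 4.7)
The approach is to port the deterministic construction of the previous theorem to the stochastic setting by choosing the Poisson parameters so that, on long time scales, the arrival pattern imitates the ``one $\lmin$ plus one $\lmax$ per instance'' structure exploited there. Given $\ep>0$, I would set $p=q=1/2$ and $\lambda$ slightly above the balanced value $2/(\lmin+\lmax)$, with the small offset chosen as a function of $\ep$. The adversary's error pattern is the stochastic analog of the previous construction: crash the link precisely at the end of every $\lmax$ transmission attempted by $\SL$. For the upper bound, I do not compare $\SL$ directly to $\OPT$; instead I exhibit an offline algorithm $\OFF$ that transmits each arriving $\lmax$ packet in parallel with one of $\SL$'s (doomed) $\lmax$-attempts, finishing just before the crash, and each arriving $\lmin$ packet during one of $\SL$'s $\lmin$-transmission intervals. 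Since $L_{\OPT}(A,E,t) \ge L_{\OFF}(A,E,t)$, any upper bound on $L_{\SL}/L_{\OFF}$ implies the same bound on the relative throughput.

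A steady-state computation then drives the argument: because $\lambda p \lmin = (1+O(\ep))/(1+\rho) < 1$, the work-conserving $\SL$ asymptotically spends a fraction $\approx 1/(1+\rho)$ of time successfully transmitting $\lmin$ packets and the complementary $\approx \rho/(1+\rho)$ fraction on failed $\lmax$-attempts, each of duration $\lmax$. Thus $\SL$'s $\lmax$-attempt rate matches (up to $O(\ep)$) the $\lmax$ arrival rate $\lambda q$, so $\OFF$ can pair each arriving $\lmax$ with an $\SL$ attempt. Summing, $L_{\OFF}(A,E,t) \approx t$ while $L_{\SL}(A,E,t) \approx t/(1+\rho)$, giving a limit ratio $1/(\rho+1)$. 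Stochastic deviations of the arrival counts from their expectations are controlled by Lemma~\ref{l:stoch-lmin}, applied with constants $\eta = (1-\ep)\lambda$ and $\eta' = (1-\ep)^{-1}\lambda$; the failure probability decays like $e^{-\Omega(t)}$ and therefore vanishes in the $t\to\infty$ limit defining relative throughput. Routine algebra then converts the base ratio $1/(\rho+1)$ together with the $(1\pm O(\ep))$ slack into the claimed bound $1/((1-\ep)\rho+1) + \ep$.

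The main obstacle is ensuring that $\OFF$ always has an $\lmax$ packet queued at the moment each $\SL$ $\lmax$-attempt begins; if $\OFF$'s $\lmax$-queue is ever empty, $\OFF$ forfeits the matching $\lmax$ success and the analysis breaks. I handle this by nudging $\lambda$ slightly above $2/(\lmin+\lmax)$, turning $\OFF$'s $\lmax$-queue into a positive-drift process (arrivals at rate $\lambda q$ exceed $\OFF$'s $\lmax$-service rate, which is tied to $\SL$'s $\lmax$-attempt rate). Standard queueing tail bounds then imply the $\lmax$-queue is nonempty except during an $o(1)$ fraction of time, which is again absorbed into the $\ep$ slack; a symmetric argument, using $\lambda p \lmin < 1$ and that $\OFF$ can serve $\lmin$ during every $\SL$ $\lmin$-interval, shows $\OFF$ clears $\lmin$ arrivals in real time. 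Combining these with a windowed union-bound application of Lemma~\ref{l:stoch-lmin} on blocks of length $\Theta(1/\ep)$ gives the bound uniformly on $[0,t]$ as $t\to\infty$.
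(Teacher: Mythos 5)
There is a genuine gap at the heart of your construction: the claim that $\OFF$ can transmit each arriving $\lmax$ packet ``in parallel with one of $\SL$'s (doomed) $\lmax$-attempts, finishing just before the crash.'' An error event is a single instantaneous corruption, and the crash must land \emph{strictly before} $\SL$ completes its $\lmax$ attempt (otherwise $\SL$'s packet counts as delivered). If $\OFF$'s $\lmax$ occupies the same window as $\SL$'s attempt, it has the same length and -- in the recurring case where $\SL$ launches an $\lmax$ attempt immediately after a crash with no $\lmin$ pending -- the same start time, so the crash that corrupts $\SL$'s packet corrupts $\OFF$'s as well. For $\OFF$ to survive, its $\lmax$ must begin strictly earlier than $\SL$'s. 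The paper manufactures exactly this stagger: the adversary blocks the link with frequent errors until $\SL$ has an $\lmin$ pending, so each productive interval opens with $\SL$ transmitting $\lmin$ packets while $\OFF$ transmits an $\lmax$; the crash is placed at $\OFF$'s completion time $\lmax$, which is at least $\lmin$ before $\SL$'s own $\lmax$ could finish. Under your choice $p=q=1/2$ with $\lambda$ near-critical, a constant fraction of $\SL$'s $\lmax$ attempts (those following a crash with an empty $\lmin$ queue, probability $e^{-\lambda p \lmax}$ bounded away from $0$) admit no such stagger, so $\OFF$ can salvage at most $\rhoflr\lmin<\lmax$ in those windows and only from its limited $\lmin$ supply. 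The claim $L_{\OFF}\approx t$, which drives your entire computation, therefore does not follow.

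The parameter choice is also not incidental: it interacts with the repair. If you adopt the paper's blocking-and-interval structure to restore the stagger, $\SL$ earns a guaranteed $\lmin$ credit at the start of every interval and $\OFF$ incurs idle waiting between intervals; with $p=q=1/2$ the waiting time for the next $\lmin$ arrival is comparable to $\lmin+\lmax$, and the resulting ratio can be as large as roughly $2/(\rho+1)$, exceeding the target $\frac{1}{(1-\ep)\rho+1}+\ep$. The paper escapes this by taking $\lambda$ large, $p$ small and $q$ close to $1$: then $\lmin\eta' p\le\ep$ makes the in-interval $\lmin$ arrivals negligible, and $1-e^{-\lambda q\lmin}\ge 1-\ep$ (via a Chernoff bound on which intervals contain an $\lmax$ arrival) makes almost every interval ``long,'' so that the per-interval credit $\lmin k/s_k$ converges to $\frac{1}{(1-\ep)\rho+1}$. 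Your overall strategy -- explicit $\OFF$, crashing all of $\SL$'s $\lmax$ attempts, bounding $L_{\SL}$ by the $\lmin$ arrival volume, and invoking Lemma~\ref{l:stoch-lmin} for concentration -- is the right one and matches the paper's; what is missing is the stagger mechanism and the parameter regime that makes the two error terms simultaneously small.
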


\begin{proof}
Consider an execution of the $\SL$ algorithm. We define intervals $I_1, I_2, \dots, I_i$ as follows.
The first such interval, $I_1$, starts with the arrival of the first $\lmin$ packet.
Then, $I_i$ starts as soon as an $\lmin$ packet is in the queue of \SL after the end of interval $I_{i-1}$.
The length of each interval depends on whether \OFF has an $\lmax$ packet in its queue at the start of the interval or not. If it has an $\lmax$ packet, the length 
of the interval is $|I_i| = \lmin + \lmax$, and we say that we have a \emph{long} interval. If it does not, the length is $|I_i| = \lmin$ and the interval is called \emph{short.}

Between intervals the adversary injects frequent errors, so \SL cannot transmit any packet.
In every interval $I_i$, $\SL$ starts by scheduling an $\lmin$ packet. In a short interval, \OFF sends an $\lmin$ packet, followed by an error injected by the adversary.
Hence, in a short interval both \SL and \OFF successfully transmit one $\lmin$ packet.
In a long interval, \OFF sends an $\lmax$ packet, after which the adversary injects an error. (Up to that point $\SL$ has been able to complete the transmission of
one or more $\lmin$ packets, but no $\lmax$ packet.)  After the error, $\OFF$ sends an $\lmin$ packet (which is available since beginning of the interval) after which continuous errors will be injected by the adversary until the next interval. Hence, in a long interval \OFF successfully transmits one $\lmin$ packet and one $\lmax$ packet, while \SL transmits only $\lmin$ packets. This implies that in both types of intervals \OFF is transmitting useful packets during the whole interval.

Let us denote by $s_k$ the total length of the intervals $I_1, I_2, \ldots, I_k$, i.e., $s_k=\sum_{i=1}^k |I_i|$. Observe that the total number of $\lmin$ packets that
arrive up to the end of interval $I_k$ is bounded by $k$ (that accounts for the $\lmin$ packet in the queue of \SL at the start of each interval) plus the packets that arrive in the intervals. From Lemma~\ref{l:stoch-lmin}, we know that there is a constant $\eta' > \lambda$ and a constant $c' > 0$ which depends only on $\eta',\lambda$ and $p$, such that the number of $\lmin$ packets that arrive in the intervals is at most $\eta'p s_k$ with probability at least $1-e^{-c' s_k}$.

Let $T_k$ be the throughput of \SL at the end of interval $I_k$. From the above, we have that $T_k$ is bounded as
$$
T_k \leq \frac{\lmin (k+ \eta'p s_k)}{s_k} = \frac{\lmin k}{s_k} + \lmin \eta'p
$$
with probability at least $\pi_1(k)=1-e^{-c' s_k}$. Observe that in the above expression it is assumed that all $\lmin$ packets that arrive by the end of $I_k$ are
successfully transmitted by \SL. We provide now the following claim.
\\

\noindent
\emph{Claim:} Let us consider the first $x+1$ intervals $I_i$, for $x>1$. The number of long intervals is at least $(1-\delta)(1-e^{-\lambda q \lmin})x$ 
with probability at least $1-\exp(-\delta^2 (1-e^{-\lambda q \lmin})x/2)$, for any $\delta \in (0,1)$.

\noindent
\emph{Proof of claim:}
Observe that if an $\lmax$ packet arrives during interval $I_i$ then the next interval $I_{i+1}$ is long. We consider now the first $x$ intervals. Since each of these
intervals has length at least $\lmin$, some $\lmax$ packet arrives in the interval with probability at least $1-e^{-\lambda q \lmin}$ (independently of what happens in other intervals). Hence, using a Chernoff bound, the probability of having less than $(1-\delta)(1-e^{-\lambda q \lmin})x$ intervals among the $x$ first intervals 
in which $\lmax$ packets arrive is at most $\exp(-\delta^2 (1-e^{-\lambda q \lmin})x/2)$.
\qed
\\

From the claim, it follows that there are at least $(1-\delta)(1-e^{-\lambda q \lmin})(k-1)$ long intervals among the first $k$ intervals, with high probability. Hence, the value of $s_k$ is bounded as
\begin{eqnarray*}
s_k & \geq & (1-\delta)(1-e^{-\lambda q \lmin})(k-1)(\lmax + \lmin) + (k - (1-\delta)(1-e^{-\lambda q \lmin})(k-1)) \lmin \\
&=&  (1-\delta)(1-e^{-\lambda q \lmin})(k-1)\lmax + k \lmin
\end{eqnarray*}
with probability at least $\pi_2(k)=1-\exp(-\delta^2 (1-e^{-\lambda q \lmin})(k-1)/2)$.
Note that $T_K$ cannot be larger than 1.
Hence, the expected value of $T_k$ can be bounded as follows.
\begin{eqnarray*}
\E [T_k] \le \pi_1(k) \pi_2(k) \left( \frac{\lmin k}{(1-\delta)(1-e^{-\lambda q \lmin})(k-1)\lmax + k \lmin} + \lmin \eta'p \right) + (1 - \pi_1(k) \pi_2(k)).
\end{eqnarray*}
Since $\pi_1(k)$ and $\pi_2(k)$ tend to one as $k$ tends to infinity, we have that
\begin{eqnarray*}
\lim_{k \rightarrow \infty} \E [T_k] & \le & \frac{\lmin}{(1-\delta)(1-e^{-\lambda q \lmin})\lmax + \lmin} + \lmin \eta'p\\
&=& \frac{1}{(1-\delta)(1-e^{-\lambda q \lmin})\gamma + 1} + \lmin \eta'p.
\end{eqnarray*}
Hence, choosing $\eta'$, $p$, $q$, and $\delta$ appropriately, the claim of the theorem follows. (E.g., they must satisfy
$\lmin \eta'p \le \ep$ and $(1-\delta)(1-e^{-\lambda q \lmin}) \ge (1-\ep)$.)
\end{proof}

\remove{
The probability that there are more than $x$ arrivals of packets $\lmin$ in an interval $\Delta$ is at most $e^{-\lambda p\Delta}(\frac{e\lambda p\Delta}{x})^x = e^{2-2\ln 2 + 2\ln \lambda p\Delta - \lambda p\Delta}$.

$\SL$ algorithm will transmit as many $\lmin$ packets as possible, from the ones it has pending, in the period $\Delta$ and therefore assuming the best case scenario for it we calculate the expected relative throughput considering $x$ to be the actual $\lmin$ packet arrivals within interval $\Delta$ and $i$ being the ones sent:
\begin{eqnarray*}
\E[T_{SL}]
& = &
\sum\limits_{x=0}^{\infty} x Pr[T_{SL} = x] \\
& = &
\sum\limits_{i=1}^{\infty} \frac{i\lmin}{\lmin + \lmax} * \frac{e^{-\lambda p\Delta}(\lambda p\Delta)^i}{i!} \\
& = &
\frac{\lmin\lambda p\Delta }{\lmin + \lmax} \sum\limits_{i=1}^{\infty} \frac{e^{-\lambda p\Delta} (\lambda p \Delta)^{i-1} }{(i-1)!} \\
& = &
\frac{\lambda p\Delta}{\rho + 1} \sum\limits_{i=0}^{\infty} \frac{e^{-\lambda p\Delta} (\lambda p \Delta)^i}{i!} \\
& = &
\frac{\lambda p\Delta}{\rho + 1}
\end{eqnarray*}
}


\remove{
\begin{proof}[Sketch]
First, let us consider a value of $\lambda p$ such that $\lambda p(\lmin + \lmax) \ll 1$. This, in combination with Lemma~\ref{l:stoch-lmin}, means that there is a constant $c>0$ that depends only on $\lambda, p$ and $\eta$, where $\eta = (1+\ep)\lambda$, such that for any time $t > \lmin + \lmax$, the number of $\lmin$ packets injected by time $t$ is at most $t\eta p$, with probability at least $1-e^{-ct}$. In other words, at most one $\lmin$ packet is injected at the sender with high probability every $\lmin + \lmax$ time.
Consider also $q \gg 2p$. This implies that $\lambda q \gg 2\lambda p$, which in combination with Lemma~\ref{l:stoch-lmin}, means that there are at least twice as many $\lmax$ than $\lmin$ packets injected at the sender by time $t$, with probability $1-e^{-ct}$, for a positive constant $c$ and $t>\lmin$.

Consider now the adversary of Section~\ref{sec:stocUB} and algorithm $\SL$. There are two types of phases again. First, when the phase starts with $\SL$ algorithm sending an $\lmin$ packet (as soon as it has one pending), $\OFF$ will send an $\lmax$. Observe that, by the previous arguments and specifically the fact that $q \gg 2p$, $\OFF$ will have an $\lmax$ packet pending in all such cases. By the time the $\lmax$ packet has been transmitted by $\OFF$, $\SL$ has attempted to send an $\lmax$ packet (provided it has no more $\lmin$ packets pending, which happens w.h.p.) but will fail due to the link error caused by the adversary as soon as $\OFF$ completes the $\lmax$ transmission. Note here, that the $\SL$ algorithm sends an $\lmax$ packet only when it has no $\lmin$ packets pending, and this is exactly what the adversary exploits. Observe also, that from the fact that $\lambda p(\lmin + \lmax) \ll 1$, this is actually the case because $\SL$ will have no more $\lmin$ packets pending by the time it completes transmitting the one, w.h.p.
Then, in the case that $\SL$ starts a phase by sending an $\lmax$ packet (again when no $\lmin$ packet is pending), $\OFF$ will send its pending $\lmin$ packet and an error will occur before the $\lmax$ packet is transmitted by $\SL$. After that, $\OFF$ does not have an $\lmin$ packet pending, and the adversary will introduce errors until the next $\lmin$ injection in order to create another phase of the previous type.

Hence, the throughput achieved is $\frac{\lmin}{\lmax + \lmin} = \frac{1}{\rho + 1}$ as claimed.
\end{proof}
}

\remove{
\begin{proof}
Algorithm $\SL$ attempts to transmit the packet of shortest length available. For the proof we are going to consider the adversary of Section~\ref{sec:stocUB}. Hence,
if there are packets of length $\lmin$, $\SL$ schedules one, that is correctly transmitted. After that (as we will show whp) it will have no other packet of length $\lmin$ but will have packets of length $\lmax$, and will hence schedule one. When the $\lmin$ packet is scheduled, algorithm $\OFF$ sends a packet of length $\lmax$ (we will show that such a packet is available with high probability) that is also correctly transmitted. Immediately after this packet is sent the adversary introduces an error that affects the
transmission of the $\lmax$ packet of $\SL$. After the error, $\SL$ still have no $\lmin$ packet in the queue (whp) and schedules again an $\lmax$ packet, while $\OFF$ schedules the $\lmin$ packet previously sent by $\SL$. The adversary allows this $\lmin$ packet to be sent without errors, but introduces errors in the $\lmax$ packet of $\SL$. 

Let us consider an arbitrary time $t$ and let us define constants $\eta$ and $\eta'$ such that $0 < \eta < \lambda < \eta'$ and $\eta' p(\lmin + \lmax) < 1$. From Lemma~\ref{l:stoch-lmin}, the number of $\lmin$ packets that arrive by time $t$ is at most $\eta^\prime pt$ with probability at least $1-e^{-ct}$. 

The expected relative throughput of an execution at time $t$, under a distribution of stochastic arrivals $\mathcal{A}$, is equal to the sum of expected throughputs across all possible arrival patterns and their probability of occurrence; $\E(T) = \sum\limits_{A \in \mathcal{A}}T(A) Pr[A]$. Hence, for a time $t_i$, at which the algorithm we consider is about to decide on the next packet to be sent, if the expected relative throughput is $T_{t_i} \leq B$ with a probability $Pr \geq 1-e^{-ct_i}$, it is also true that $1 > T_{t_i} \geq B$ with probability $Pr < e^{ct_i}$. At such a point, $\E(T_{t_i}) \leq B(1-e^{-ct_i}) + 1e^{-ct_i}$ and hence in the long run, the relative throughput becomes $T = \lim\limits_{t\rightarrow\infty} \E[T_{t_i}] = B$.

Now for the case of stochastic arrivals with $\lambda p(\lmin + \lmax) \ll 1$ and $q \gg 2p$, let us define an $\eta^\prime < \lambda$ such that $\eta^\prime p(\lmin + \lmax) < 1$. From Lemma~\ref{l:stoch-lmin}, the number of $\lmin$ packets that arrive by time $t_i$ is at most $\eta^\prime pt_i$ with probability at least $1-e^{-ct_i}$. Also, since $q > 2p \Rightarrow \lambda q > 2\lambda p$, which means that the number of $\lmax$ packets that arrive by time $t_i$ are at least twice as many as the $\lmin$ packets. Therefore, there have arrived at least $\eta^\prime qt_i$ packets of length $\lmax$ with probability at least $1-e^{-ct_i}$.

Considering the adversary of Section~\ref{sec:stocUB} and algorithm $\SL$, there are two types of phases. The first, $\SL$ algorithm starts sending an $\lmin$ packet while $\OFF$ sends an $\lmax$. Observe that, by the previous arguments $\OFF$ will have an $\lmax$ packet pending with probability at least $1-e^{-ct_i}$. By the time the $\lmax$ packet has been transmitted by $\OFF$, $\SL$ will have already made an attempt to send an $\lmax$ packet (provided it has no more $\lmin$ packets pending) but will fail due to the link error caused by the adversary as soon as $\OFF$ completes the $\lmax$ transmission. Note here, that the $\SL$ algorithm sends an $\lmax$ packet only when it has no $\lmin$ packets pending, and this is exactly what the adversary exploits. Observe also, that from the fact that $\lambda p(\lmin + \lmax) \ll 1$, this will be the case with probability $1-e^{-ct}$.
In the second case, $\SL$ starts by sending an $\lmax$ packet (again when no $\lmin$ packet is pending) and $\OFF$ now sends its pending $\lmin$ packet. An error will occur before the $\lmax$ packet is transmitted by $\SL$ and after that, $\OFF$ will not have an $\lmin$ packet pending. Therefore, the adversary will introduce errors until the next arrival of an $\lmin$ packet in order to create another phase of the previous type.

As a result, the expected relative throughput achieved by time $t$ is at least $\frac{\lmin}{\lmax + \lmin} = \frac{1}{\rho + 1}$ with probability $1-e^{-ct}$ as claimed.
\end{proof}
}

\begin{theorem}
Algorithm $\LL$ cannot achieve relative throughput larger than $0$, even under stochastic arrivals.
\end{theorem}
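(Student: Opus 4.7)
The plan is to construct an adaptive adversarial error pattern $E$ for which, almost surely, $L_{\LL}(A,E,t)$ stays bounded while $L_{\OFF}(A,E,t)$ grows linearly in $t$; this forces $T_{\LL} = 0$. The adversary strategy is: whenever $\LL$ begins transmitting an $\lmax$ packet at time $s$, insert a link error at time $s + \lmax - \epsilon$ for a fixed $\epsilon \in (0,\lmax - \lmin)$; no errors are inserted at any other time. By construction, every $\lmax$-attempt of $\LL$ is corrupted just before completion.

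The first key step is to argue that $\LL$ eventually succeeds at nothing. Since $\LL$ gives absolute priority to $\lmax$-packets and never completes one under $E$, the number of pending $\lmax$-packets at $\LL$ by time $t$ equals the number of $\lmax$-arrivals in $[0,t]$, a Poisson process of rate $\lambda q > 0$. Thus there is an almost surely finite random time $T_0$ after which this queue is permanently non-empty; for all $t \ge T_0$, $\LL$ is perpetually trying (and failing to transmit) $\lmax$-packets. Consequently $L_{\LL}(A,E,t) = L_{\LL}(A,E,T_0)$ for all $t \ge T_0$, a bounded random variable independent of $t$.

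The second step bounds $\OFF$ from below. After $T_0$, the error times prescribed by $E$ are spaced exactly $\lmax - \epsilon > \lmin$ apart, so in each such inter-error window an offline schedule can fit at least one successful $\lmin$-packet transmission, provided there is supply. By Lemma~\ref{l:stoch-lmin}, choosing any $\eta \in (0,\lambda)$ yields at least $\eta p t$ $\lmin$-arrivals by time $t$ with probability $1 - e^{-ct}$, so the supply never runs out. Hence
\[
L_{\OFF}(A,E,t) \;\ge\; \lmin \cdot \left\lfloor \frac{t - T_0}{\lmax - \epsilon} \right\rfloor \;=\; \Omega(t)
\]
on this high-probability event.

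Combining, $T_{\LL}(A,E,t)$ is at most $O(1)/\Omega(t) \to 0$ on the high-probability event, and at most $1$ otherwise; since the bad-event probability decays as $e^{-\Omega(t)}$, a dominated-convergence argument gives $\lim_t \E_{A}[T_{\LL}(A,E,t)] = 0$, so $T_{\LL} = 0$. The main obstacle is a careful treatment of the initial transient before $T_0$ together with the Poisson tail events where either the $\lmax$-queue empties unexpectedly or the $\lmin$-supply is insufficient; both are controlled by the exponential tail estimates of Lemma~\ref{l:stoch-lmin}.
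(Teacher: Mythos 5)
Your proof is correct, but it uses a genuinely different adversary than the paper. The paper's adversary jams the link \emph{continuously} (errors spaced closer than $\lmin$) until both an $\lmax$ and an $\lmin$ packet are pending, then opens a window of length exactly $\lmin$ in which \OFF completes an $\lmin$ packet while \LL is stuck mid-$\lmax$, and repeats; by construction \OFF transmits something in every productive phase, so no supply argument is needed there. Your adversary instead acts \emph{only} on \LL's $\lmax$-attempts (killing each one $\epsilon$ before completion) and lets time run freely otherwise, which forces you to (i) observe that after the first $\lmax$-arrival $T_0$ the $\lmax$-queue of \LL is permanently non-empty so \LL freezes, and (ii) invoke Lemma~\ref{l:stoch-lmin} to show \OFF accumulates $\Omega(t)$ of successfully transmitted $\lmin$-length. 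What your approach buys is a fully explicit treatment of the expectation $\lim_t \E_A[T_{\LL}(A,E,t)]$ via exponential tail bounds, which the paper's two-line sketch glosses over; what the paper's approach buys is independence from arrival rates, since blocking the link until \OFF has an $\lmin$ packet pending sidesteps any supply issue. One quantitative slip in your write-up: the bound $L_{\OFF}(A,E,t)\ge \lmin\lfloor (t-T_0)/(\lmax-\epsilon)\rfloor$ presumes \OFF has an $\lmin$ packet available in \emph{every} inter-error window, which Lemma~\ref{l:stoch-lmin} does not guarantee when $\lambda p(\lmax-\epsilon)<1$. This is harmless --- in any window where \OFF's $\lmin$-queue is empty it has already transmitted all $\lmin$ arrivals so far, so in either case $L_{\OFF}(A,E,t)\ge \lmin\cdot\min\{\eta p t,\ (t-T_0)/(\lmax-\epsilon)\}-O(\lmax)=\Omega(t)$ with high probability --- but you should state the bound in that min form rather than claiming one packet per window unconditionally.
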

\begin{proof}
The scenario is simple. The adversary blocks all successful transmissions (by placing errors at distance smaller than $\lmin$)
until at least two packets have arrived, one of length $\lmax$ and one of length $\lmin$. Algorithm $\LL$ schedules
a packet of length $\lmax$, while an offline algorithm \OFF schedules a packet of length $\lmin$. Once \OFF completes the transmission of this
packet, the adversary causes an error on the link, and hence $\LL$ does not complete the transmission of the $\lmax$ packet. Then,
again the adversary blocks successful transmissions until \OFF has at least one $\lmin$ packet pending. 
The scenario is repeated for ever; while \OFF will be transmitting successfully
all $\lmin$ packets, $\LL$ will be stuck on the unsuccessful transmissions of $\lmax$ packets.
Hence, the throughput will be $0$. 
\end{proof}

\section{Randomized Algorithms}
\label{sec:randomization}

So far we have considered deterministic solutions. 
In many cases, randomized
solutions can obtain better performance. As we argue in this section,
this is not the case for the problem considered in this work. 

Let us first indicate how the model and the definition of relative throughput must be extended to the case of randomized algorithms. 
We assume that the adversary knows the algorithm and the history of the random choices made by the algorithm until the current point in time, but it does not know the future random choices made by the algorithm.

Regarding the relative throughput, and following the terminology of Section~\ref{sec:model}, in the case of randomized algorithms, an adversarial error-function $E$ has 
three arguments: an arrival pattern $A$, a string of values of random bits $R$, and time $t$.
The output of $E(A,R,t)$ is a set of errors until time $t$ based on the execution of
a given randomized algorithm with the values of random bits taken from $R$
under an adversarial pattern $A$ by round $t$.

For arrival pattern $A$, adversarial error-function $E$, string of random bits $R$ and time $t$, 
we define the \emph{relative throughput $T_{\Alg}(A,E,R,t)$ of a randomized
algorithm $\Alg$ by time $t$} as follows: 
\[
T_{\Alg}(A,E,R,t) = \frac{L_{\Alg}(A,E,R,t)}{L_{\OPT}(A,E,R,t)}
\ .
\]
$T_{\Alg}(A,E,R,t)$ is defined as 1 if $L_{\Alg}(A,E,R,t)=L_{\OPT}(A,E,R,t)=0$.

(Note that  \OPT is not randomized, but since the error-function $E$ depends on
the random choices of the algorithm, this has a direct effect on the performance of \OPT.)

We define the {\em relative throughput} of algorithm $\Alg$
in the adversarial arrival model as follows:
\[
T_{\Alg} = \inf_{A \in {\mathcal A}, E  \in {\mathcal E}} \lim_{t \rightarrow \infty} 
\E_{R\in{\mathcal R}} [T_{\Alg}(A,E,R,t)]
\ ,
\]
where $A$ is understood as a function of $R$ and $t$, 
and ${\mathcal R}$ is a distribution of all possible strings of random bits
used by the algorithm.
In the stochastic arrival model the relative throughput needs to take into account the random distribution of
arrival patterns in ${\mathcal A}$ (they are not functions now, as they do not depend
on the adversary), and it is defined as follows:
\[
T_{\Alg} = \inf_{E  \in {\mathcal E}} \lim_{t \rightarrow \infty} 
\E_{A \in {\mathcal A}, R\in{\mathcal R}} [T_{\Alg}(A,E,R,t)]
\ .
\]

Now, looking at the analyses of the upper bounds for deterministic algorithms with deferred feedback (Section~\ref{sec:deferred}) 
and with instantaneous feedback (under adversarial arrivals, Section~\ref{sec:advUB}, and 
stochastic arrivals, Section~\ref{sec:stocUB}), it is not difficult to see
that the derived bounds hold also for randomized algorithms. The main observation that leads to
this conclusion is the following: The adversarial error and arrival patterns defined in the analyses
are reactive, in the sense that the adversary that controls them does not need to know the future (and in particular the future random bits of the algorithm
) and makes its decisions only by looking at the system's history. In other words, when a given algorithm decides in a given phase what
packet length to transmit, the adversary reacts adaptively on the specific choice, regardless 
of whether this choice was done deterministically or by flipping a coin. This leads to the conclusion
that randomized solutions cannot yield better results (wrt relative throughput) for the considered packet
scheduling problem.

\end{document}